\PassOptionsToPackage{usenames,dvipsnames}{xcolor}
\documentclass[sigconf, nonacm]{acmart}

\newcommand\vldbdoi{XX.XX/XXX.XX}
\newcommand\vldbpages{XXX-XXX}
\newcommand\vldbvolume{14}
\newcommand\vldbissue{1}
\newcommand\vldbyear{2020}
\newcommand\vldbauthors{\authors}
\newcommand\vldbtitle{\shorttitle} 
\newcommand\vldbavailabilityurl{https://github.com/gegeji/hrnn}
\newcommand\vldbpagestyle{plain} 

\usepackage{listings}
\usepackage{xcolor}
\usepackage{tikz}
\usetikzlibrary{positioning,fit,arrows.meta}
\usepackage{cmd}
\usepackage{graphicx}

\usepackage{float}
         
\usepackage{pgfplots}     
\pgfplotsset{compat=1.16}

\usepackage{amsmath,amsfonts}
\usepackage{algorithmic}
\usepackage{textcomp}
\usepackage{booktabs}
\usepackage[a-1b]{pdfx}
\usepackage{subcaption}
\usepackage{colortbl}
\usepackage{bbding}
\usepackage{braket}
\usepackage{epsfig,latexsym,graphics}
\usepackage[font={small}]{caption}
\usepackage{setspace}
\usepackage{enumitem}

\usepackage{cmd}
\usepackage[spacesep,definevectors]{easyvector} 
\usepackage{tabularx}
\usepackage{makecell}
\usepackage{multirow}
\usepackage{bbm} 
\usepackage{fontawesome5}
\usepackage[normalem]{ulem} 
\renewcommand{\gg}{\mathrel{>\!\!>}} 

\SetAlgoInsideSkip{0pt}

\SetAlgoSkip{smallskip}

\usepackage{ifthen}
\newcommand{\themacro}{themacro}

\newcommand{\KNN}{\kw{KNN}}
\newcommand{\HRNN}{\kw{HRNN}}
\newcommand{\HAMG}{\kw{HAMG}}
\newcommand{\RDT}{\kw{RDT}}
\newcommand{\SFT}{\kw{SFT}}

\newcommand{\HNSWG}{G_{\kw{HNSW}}}
\newcommand{\KNNG}{G_{\kw{KNN}}}

\begin{document}
\title{HRNN: A Hybrid Graph Index for Approximate Reverse $k$-Nearest Neighbor Search on High-Dimensional Vectors}

\author{Wenxuan Xia}
\affiliation{%
  \institution{HKUST (GZ)}
  \country{China}
}
\email{wxia248@connect.hkust-gz.edu.cn}

\author{Mingyu Yang}
\affiliation{%
  \institution{HKUST (GZ) \& HKUST}
  \country{China}
}
\email{myang250@connect.hkust-gz.edu.cn}

\author{Wentao Li}
\affiliation{%
\institution{University of Leicester}
  \country{United Kingdom}
}
\email{wl226@leicester.ac.uk}

\author{Wei Wang}
\affiliation{%
  \institution{HKUST (GZ) \& HKUST}
  \country{China}
}
\email{weiwcs@ust.hk}


\begin{abstract}
Reverse $k$-nearest neighbor (R$k$NN) search returns all data points that regard a query vector as one of their $k$-nearest neighbors ($k$NNs). Existing R$k$NN methods typically follow a \textbf{filter-and-verification framework}: vectors near the query vector are first collected as candidates and then verified against their $k$NN-radius (i.e., the distance to their $k$-th nearest neighbor). However, existing methods face two key limitations in high-dimensional spaces. First, nearby vectors often do not belong to the query's true R$k$NN set, resulting in excessive candidate expansion overhead. Second, existing methods compute $k$NN-radius online during verification, incurring substantial query-processing cost.

To address these limitations, we propose $\HRNN$, a hybrid graph index for approximate R$k$NN search. 
(1) Rather than directly treating nearby vectors as R$k$NN candidates, $\HRNN$ uses them as proxy points based on the \textbf{assumption} that a query's R$k$NN results can often be discovered through the R$k$NN results of its nearby vectors. 
(2) To reduce verification cost, $\HRNN$ \textbf{materializes} high-fidelity $k$NN-radius offline, eliminating expensive online reconstruction while preserving accuracy. $\HRNN$ combines a navigation graph, a ranked $\KNN$ graph, and reverse-neighbor lists into a hybrid index that supports efficient proxy retrieval, candidate generation, and $k$NN-radius access. 
We also develop efficient index construction and append-only maintenance algorithms. Extensive experiments show that $\HRNN$ consistently outperforms existing methods, achieving up to one order of magnitude higher throughput. Moreover, $\HRNN$ scales to datasets containing up to 10 million high-dimensional vectors while supporting efficient dynamic index maintenance.
\end{abstract}

\maketitle


\vspace{-0.5em}
\pagestyle{\vldbpagestyle}
\begingroup\small\noindent\raggedright\textbf{PVLDB Reference Format:}\\
\vldbauthors. \vldbtitle. PVLDB, \vldbvolume(\vldbissue): \vldbpages, \vldbyear.\\
\href{https://doi.org/\vldbdoi}{doi:\vldbdoi}
\endgroup
\begingroup
\renewcommand\thefootnote{}\footnote{\noindent
This work is licensed under the Creative Commons BY-NC-ND 4.0 International License. Visit \url{https://creativecommons.org/licenses/by-nc-nd/4.0/} to view a copy of this license. For any use beyond those covered by this license, obtain permission by emailing \href{mailto:info@vldb.org}{info@vldb.org}. Copyright is held by the owner/author(s). Publication rights licensed to the VLDB Endowment. \\
\raggedright Proceedings of the VLDB Endowment, Vol. \vldbvolume, No. \vldbissue\ %
ISSN 2150-8097. \\
\href{https://doi.org/\vldbdoi}{doi:\vldbdoi} \\
}\addtocounter{footnote}{-1}\endgroup

\vspace{-0.5em}

\ifdefempty{\vldbavailabilityurl}{}{
\vspace{.3cm}
\begingroup\small\noindent\raggedright\textbf{PVLDB Artifact Availability:}\\
The source code, data, and/or other artifacts have been made available at \url{\vldbavailabilityurl}.
\endgroup
}

\section{Introduction}
With the rapid development of large language models (LLMs) and intelligent agents~\cite{LLM-RAG-NIPS-2020,MQA-VLDB-2024-Wang}, modern data are increasingly represented as high-dimensional vectors, such as text embeddings, image features, and other multi-modal representations~\cite{RetRobust-2024-ICLR, SelfRAG-2024-ICLR, RankRAG-2024-neuips}. 
Given a dataset $D$ of vectors/data points, the \textbf{$k$-nearest neighbors ($k$NN)} of a data point are the $k$ points closest to it, where distance is typically measured using Euclidean distance. 
Conversely, the \textbf{reverse $k$-nearest neighbors (R$k$NN)} of a query point $q$ consist of all data points in $D$ that regard $q$ as one of their $k$-nearest neighbors.
Given a dataset $D$, the R$k$NN search problem aims to efficiently retrieve the R$k$NN set of a query vector $q$ from $D$.

The R$k$NN set of a query vector $q$ naturally reflects its influence over the dataset $D$, since each vector in the R$k$NN set regards $q$ as one of its $k$-nearest neighbors. 
Consequently, R$k$NN search has been widely used in applications that require influence analysis and reverse relationship discovery. 
For example, in spatial analytics, R$k$NN search can identify attractive facility locations by finding sites that appear among the $k$-nearest neighbors of a large number of customers \cite{InfluenceZone-2011-ICDE}.
In retrieval-augmented generation (RAG) systems, R$k$NN search can help identify influential knowledge chunks that are frequently referenced by many queries.
More broadly, R$k$NN search has been extensively studied in machine learning tasks that rely on reverse influence relationships among data points, including density-based clustering~\cite{RNN-Clustering-TKDE-2017, KR-DBSCAN-2021-ExpSys, RNNClustering-2019-InfoSys}, outlier detection~\cite{RKNN-Outlier-detection-TKDE-2014, OutlierDetection-2015-PatRecLet, RNN-Outlier-Detection-2025-PatAna}, and class-imbalance sampling~\cite{SMOTE-RkNN-2022-Elesevier, RNN-Sampling-2019-PatRecLetter}.

To process the R$k$NN search for a query $q$, an \textbf{intuitive} approach is to first compute the $k$NN set of every point $o$ in the dataset $D$ and obtain its \textbf{$k$NN-radius}, i.e., the distance between $o$ and its $k$-th nearest neighbor. 
Then, by definition, if the distance between $q$ and point $o \in D$ does not exceed the $k$NN radius of $o$, $o$ is identified as an R$k$NN of $q$.
However, this intuitive approach becomes prohibitively expensive when the dataset is large or the vector dimensionality is high. 
Although prior studies~\cite{RKNN-Road-Network-SSTD-2025, InfluenceZone-2011-ICDE} have explored spatial indexes such as R-trees to accelerate R$k$NN search, exact R$k$NN retrieval in high-dimensional vector spaces remains challenging. 
Thus, this paper focuses on approximate R$k$NN (i.e., AR$k$NN) search.

\stitle{Existing Solutions.}
Existing methods for AR$k$NN search generally follow a \textbf{filter-and-verification framework}. 
They first retrieve a superset of potential R$k$NN results for a query $q$ as candidates, and then verify each candidate using a process similar to the intuitive approach described above. 
Early methods mainly rely on spatial partitioning~\cite{TPL-2004-VLDB, TPL2-2007-Springer} or precomputed distance bounds~\cite{MRKNNCop-2006-SIGMOD} to prune unpromising vectors from the dataset. 
Yet, these approaches suffer from high preprocessing and indexing costs in high-dimensional spaces due to the curse of dimensionality~\cite{Curse-of-dim-1998-ACM}. 
More recent methods, including $\SFT$~\cite{SFT-2003}, $\RDT$~\cite{RDT-2017-VLDB}, and the state-of-the-art $\HAMG$~\cite{HAMG-ICDE-2024}, are based on the assumption that \textbf{the R$k$NNs of a query vector often overlap significantly with its nearby neighbors.} 
Thus, these methods first retrieve nearby neighbors as candidates, such as the top-$k'$ nearest neighbors of the query vector in $\SFT$~\cite{SFT-2003} and $\RDT$~\cite{RDT-2017-VLDB}, where $k' \gg k$, or the $k$-hop neighbors in graph-based methods such as $\HAMG$~\cite{HAMG-ICDE-2024}.
They then verify whether each candidate belongs to the R$k$NN set of the query by comparing the query distance against the candidate's $k$NN-radius.

\begin{figure}[t]
    \centering
    \includegraphics[width=0.74\linewidth]{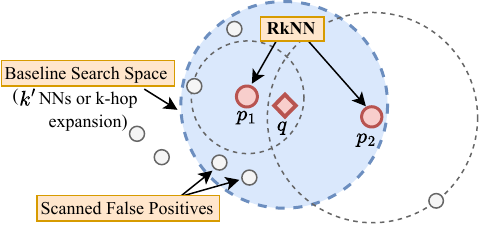}
    
    \vspace{-0.4cm}
    \caption{Illustration of the inefficiency of existing methods. The radius of the circles centered at points (e.g., $p_1$ and $p_2$) in the dataset $D$ denotes their $k$NN-radius. Given a query vector $q$, the goal is to retrieve its R$k$NN results ($p_1$ and $p_2$ in this example, as we set $k=1$). 
    Existing methods typically explore nearby neighbors, such as through top-$k'$ ($k' \gg k$) nearest-neighbor search or $k$-hop graph expansion, and then verify whether the retrieved candidates are true R$k$NN results. However, to reach $p_2$, which has a large $k$NN-radius due to its sparse neighborhood, these methods must traverse a large search region and examine many false positive candidates (gray points), resulting in substantial unnecessary computation.}
    \label{fig:motivation}
    \vspace{1em}
\end{figure}

\stitle{Limitations.}
Although recent methods achieve good performance by considering the properties of high-dimensional spaces, they still suffer from inherent limitations that restrict their efficiency.

\sstitle{Limitation 1: The mismatch between R$k$NN and nearby neighbors.}
Existing methods need to examine a large number of nearby neighbors to achieve high recall, because true R$k$NN results can be far away from the query vector in the embedding space (see Fig.~\ref{fig:motivation} and Fig.~\ref{fig:nonlocality-cdf}). 
For example, a point $o \in D$ may still regard query vector $q$ as one of its $k$-nearest neighbors if $o$ lies in a sparse region and therefore has a large $k$NN-radius. As a result, the query must search a large neighborhood region before reaching $o$, leading to excessive candidate expansion and unnecessary verification cost. This mismatch between R$k$NN results and nearby neighbors fundamentally limits the efficiency of existing methods.

\sstitle{Limitation 2: High verification cost.}
After generating candidate points, existing methods verify whether a candidate $o$ belongs to the final R$k$NN result by comparing the distance between $q$ and $o$ against the $k$NN-radius of $o$. 
Specifically, if $\delta(q,o)$ is no larger than the $k$NN-radius of $o$, then $q$ is regarded as one of the $k$-nearest neighbors of $o$, and thus $o$ is an R$k$NN of $q$. 
Yet, the $k$NN-radius of each candidate is typically computed online during query time~\cite{HAMG-ICDE-2024, RDT-2017-VLDB, Rdnn-tree-2001-ICDE}, as precomputing and storing these distances for all $k$ values incurs prohibitive preprocessing costs or requires rebuilding the index.
As a result, candidate verification introduces substantial computational overhead and becomes a major bottleneck in existing R$k$NN methods, as confirmed by our empirical studies.

\begin{table}[t]
    \caption{Comparison between $\HRNN$ and existing R$k$NN methods.}
    \label{tab:compare}
    \vspace{-0.4cm}
    \centering
    \small
    \setlength{\tabcolsep}{3pt}
    \renewcommand{\arraystretch}{1.05}
    \resizebox{0.9\columnwidth}{!}{%
    \begin{tabularx}{\columnwidth}{
      @{}>{\raggedright\arraybackslash}p{18mm}
      >{\raggedright\arraybackslash}X
      >{\raggedright\arraybackslash}X
      >{\raggedright\arraybackslash}X
      >{\raggedright\arraybackslash}X@{}
    }
    \toprule
     & $\HRNN$ (Ours) & $\HAMG$ & $\RDT$ & $\SFT$ \\
    \midrule
  
    Filter (Candidate Generation)
    & Reverse-neighbor lists of proxies 
    & $k$-hop graph traversal 
    & Incremental $k$NN search 
    & One-shot $k$NN search \\
    
    \midrule
    
    Verification 
    & Pre-materialized $k$NN-radius 
    & Online computation 
    & Online computation 
    & Online computation \\
    
    
    
    \bottomrule
    \end{tabularx}}\vspace{0.5em}
\end{table}

\stitle{Our Solution.}
To address the limitations of existing methods, we propose $\HRNN$, a \underline{H}ybrid graph index for approximate \underline{R}everse $k$-\underline{N}earest \underline{N}eighbor search. 
(1) To overcome the mismatch between R$k$NN results and nearby neighbors, rather than assuming that a query's nearby neighbors directly contain its R$k$NN results, we make a different assumption: \textbf{a query's R$k$NN results can often be discovered through the R$k$NN results of its nearby neighbors.} Based on this assumption, $\HRNN$ first retrieves the query's nearby neighbors (i.e., $k$NNs) as \textbf{proxy} points and then leverages their R$k$NN results to recover the query's R$k$NN results. To support this process, each point maintains a \textbf{reverse-neighbor list} that records the points in $D$ whose nearest-neighbor lists contain it, enabling efficient retrieval of its R$k$NN results for arbitrary $k$ when used as a proxy.
This design avoids excessive neighborhood expansion while maintaining high retrieval accuracy.
(2) To address the high verification cost of existing methods, $\HRNN$ eliminates the expensive online computation of $k$NN-radius during query processing. Instead, it materializes $k$NN-radius information offline, enabling efficient candidate verification at query time.

Achieving these goals is non-trivial, as the index must simultaneously support $k$NN retrieval for query vectors, reverse-neighbor-list access for proxy points, and fast $k$NN-radius lookup for arbitrary values of $k$. 
To this end, $\HRNN$ integrates two complementary graph structures.
First, it employs a navigation graph, such as $\HNSW$~\cite{HNSW-2020-TPAMI}, to retrieve the query's nearest neighbors as proxy points. 
Second, it maintains a ranked $\KNN$ graph~\cite{NN-Descent-WWW-2011} that stores nearest-neighbor relationships among data points. 
This graph provides direct access to pre-materialized $k$NN-radius for verification and serves as the basis for constructing reverse-neighbor lists, which enable efficient retrieval of proxy points' R$k$NN results for candidate generation.
By tightly integrating a navigation graph, a ranked $\KNN$ graph, and reverse-neighbor lists into a unified index, $\HRNN$ enables efficient R$k$NN query processing.
We further develop efficient index construction and insertion-only maintenance algorithms. A comparison between $\HRNN$ and existing methods is in Table~\ref{tab:compare}.

\stitle{Contributions.}
We summarize our main contributions as follows.

\sstitle{Problem analysis (\S\ref{sec:problem-analysis}).}
We identify two key shortcomings of existing methods for high-dimensional AR$k$NN search. 
First, true R$k$NN results can be far away from the query vector in the embedding space, and thus the mismatch between R$k$NN results and nearby neighbors leads to excessive false positive candidates and unnecessary computation. 
Second, the $k$NN-radius of each candidate is computed online, which incurs huge overhead during the verification stage.

\sstitle{Hybrid index design (\S\ref{sec:main-framework}).}
We propose $\HRNN$, a hybrid graph index for efficient AR$k$NN search. 
$\HRNN$ integrates a navigation graph, a ranked $\KNN$ graph, and reverse-neighbor lists derived from the ranked $\KNN$ graph. 
The navigation graph is used to efficiently retrieve nearby proxy points, while the reverse-neighbor lists are used to recover the R$k$NN candidates of the query vector. 
In addition, the ranked $\KNN$ graph materializes $k$NN-radius information for arbitrary values of $k$, enabling efficient candidate verification without expensive online $k$NN-radius computation.

\sstitle{Index construction and maintenance (\S\ref{sec:main-framework}).}
We develop efficient algorithms for $\HRNN$ index construction and maintenance. 
Specifically, we use the navigation graph as initialization seeds to improve the construction quality of the ranked $\KNN$ graph. 
We further propose efficient methods to construct reverse-neighbor lists and support insertion-only index maintenance under dynamic updates.

\sstitle{Extensive experimental studies (\S\ref{sec:experiment}).}
Extensive experiments on four real-world datasets show that $\HRNN$ improves the recall-throughput trade-off over strong R$k$NN baselines by up to one order of magnitude. 
Additional ablation, varying-$k$, and scalability experiments further validate the effectiveness of our $\HRNN$.

\ifthenelse{\isundefined{\themacro}}{
Due to space limitations, some proofs, algorithms, and experimental results are omitted and deferred to the technical report~\cite{technicalreport}.}{}

\begin{table}[t]
\caption{Notation summary.}\vspace{-1em}
\label{tbl:notation}
\footnotesize
\begin{tabular*}{\linewidth}{@{\extracolsep{\fill}} p{15mm} | p{70mm}}
\toprule
Notation & Description \\
\midrule


$\HNSWG$ & $\HNSW$ navigation graph \\
$\KNNG$ & Ranked $\KNN$ graph \\
$\KNNG[o]$ & Ranked $\KNN$ list of $o$, sorted by distance \\
$\mathbf{R}[o]$ & Reverse-neighbor list of $o$ \\
$r_k(o)$ & $k$NN radius (distance to the $k$-th nearest neighbor) of $o$ \\
$k$ & Target parameter for $k$NN or R$k$NN search\\
$\mathbf{K}$ & Number of stored neighbors per vertex in $\KNNG$ \\
$\Theta$ & Rank threshold for reverse candidate generation \\
$m$ & Number of proxy vectors retrieved from $\HNSWG$ \\
$\Theta_u$ & Rank threshold used during insertion maintenance \\
$m_u$ & Number of update proxies used during insertion maintenance \\
\bottomrule
\end{tabular*}
\vspace{0.5em}
\end{table}

\vspace{-0.5em}
\section{Preliminary}

We first formulate the approximate R$k$NN search problem, and then introduce the graph-based index that supports $k$NN retrieval.

\vspace{-0.5em}

\subsection{Problem Definition}
We first introduce the definitions of $k$NN and R$k$NN, and then formally define the studied problem.
Table~\ref{tbl:notation} summarizes the commonly used notations in this paper.

\vspace{-0.5em}

\begin{definition}[$k$-Nearest Neighbor ($k$NN)]
Given a dataset $D$ of points in Euclidean space $\mathbb{R}^d$, a $d$-dimensional query vector $q$, and an integer $k$, the $k$NN set $N_k(q)$ of $q$ is a subset of $D$ such that $|N_k(q)| = k$ and $\forall v \in N_k(q),\ \forall x \in D \setminus N_k(q),\ \delta(v,q)\le \delta(x,q)$.
where $\delta(v,q)$ denotes the Euclidean distance between $v$ and $q$.
\end{definition}

\vspace{-0.3em}

Note that $q$ may not belong to $D$. For each point $o \in D$, let $N_k(o)$ denote its $k$NN in $D$, and let its \textbf{$k$NN-radius} be defined as $r_k(o)=\delta(o,v)$, where $v \in N_k(o)$ is the $k$-th nearest neighbor of $o$.

\vspace{-0.3em}

\begin{definition}[Reverse $k$-Nearest Neighbor (R$k$NN)]
Given a dataset $D \subseteq \mathbb{R}^d$, a $d$-dimensional query vector $q$, and an integer $k$, the R$k$NN set of $q$ is defined as
$A_k(q)=\{o \in D \mid \delta(q,o)\le r_k(o)\}$.
\end{definition}

The R$k$NN set $A_k(q)$ contains all points $o \in D$ that regard $q$ as one of their $k$ nearest neighbors. Equivalently, the distance between $q$ and $o$ is no larger than the $k$NN-radius of point $o$.
Since exact R$k$NN search is computationally expensive in high-dimensional spaces, existing methods typically trade accuracy for efficiency by returning approximate results. We therefore formally define the problem of \textbf{approximate R$k$NN (AR$k$NN) search} as follows.

\vspace{-0.3em}

\begin{definition}[Approximate Reverse $k$-Nearest Neighbor (AR$k$NN) Search]\label{defi:arknn}
Given a dataset $D$, a query vector $q$, and an integer $k$, an AR$k$NN search returns a set $\hat A_k(q)\subseteq D$ as an approximation to the exact R$k$NN result set $A_k(q)$. Its accuracy is evaluated against $A_k(q)$ rather than by an $\epsilon$-relaxed membership predicate.
\end{definition}

Following~\cite{HAMG-ICDE-2024}, we use Recall@$k$ to evaluate the accuracy of AR$k$NN results against the exact R$k$NN ground truth.

\vspace{-0.3em}

\begin{definition}[Recall@$k$]\label{def:recall}
For a query $q$, let $A_k(q)$ denote the exact R$k$NN result set and let $\hat A_k(q)$ denote the approximate result set. The per-query Recall@$k$ is defined as
\[
\mathrm{Recall@}k(q)=
\begin{cases}
\frac{|A_k(q)\cap \hat A_k(q)|}{|A_k(q)|}, & |A_k(q)|>0,\\
1, & |A_k(q)|=0 \text{ and } |\hat A_k(q)|=0,\\
0, & |A_k(q)|=0 \text{ and } |\hat A_k(q)|>0.
\end{cases}
\]
For a query workload $Q$, Recall@$k$ is computed as the average of $\mathrm{Recall@}k(q)$ over all $q \in Q$.
\end{definition}

\begin{figure}
    \centering
    \includegraphics[width=0.55\linewidth]{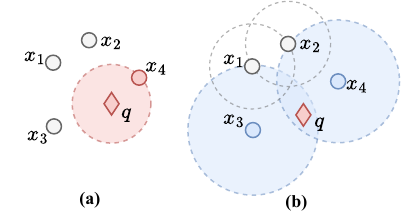}
     \vspace{-0.4cm}
    \caption{Illustrative examples of $k$NN and R$k$NN search.}
    \label{fig:knn-vs-rknn}
\end{figure}

\vspace{-1em}

\begin{example}
  Fig.~\ref{fig:knn-vs-rknn} shows $k$NN and R$k$NN search. 
  Let $x_1,\ldots,x_4$ be data points and $q$ the query, with $k=1$. In Fig.~\ref{fig:knn-vs-rknn}(a), the nearest neighbor of $q$ is $x_4$, and the (radius of the) circle centered at $q$ denotes its $k$NN-radius. In Fig.~\ref{fig:knn-vs-rknn}(b), the circles centered at $x_1,\ldots,x_4$ indicate their $k$NN-radius. Since $q$ lies within the $k$NN-radius of $x_3$ and $x_4$ but outside those of $x_1$ and $x_2$, the R$k$NN result set of $q$ is $\{x_3, x_4\}$.
\end{example}

\vspace{-1em}

\subsection{Graph-Based $k$NN Search}
We briefly introduce graph-based methods for high-dimensional $k$NN search, which serve as the foundation of the studied AR$k$NN search problem. Among existing high-dimensional $k$NN methods, graph-based approaches~\cite{HNSW-2020-TPAMI, NSG-VLDB-2019-Fu, TauMG-2023-SIGMOD, SSSG-2022-TPAMI, DiskANN-NIPS-2019, Graph-index-survey-VLDB-2021-Wang, RoarGraph-VLDB-2024-Chen, NN-Descent-WWW-2011, Flash-Graph-Index-2025-SIGMOD, Graph-tree-index-survey-2023-IEEE-DataEngBull, AlphaCG-2026-SIGMOD-Lu, Graph-Theory-2023-Indyk, Fast-Graph-2025-SIGMOD-Lu, GreedySearch-2011-IJCAI,PipeANN-2026-OSDI,VSAG-VLDB-2025-Mingyu,HVS-2021-VLDB} are particularly popular due to their state-of-the-art performance. These methods model the points in dataset $D$ as vertices in a graph and connect nearby vertices through directed or undirected edges. As a result, $k$NN search for a query vector $q$ becomes a graph traversal process over the constructed index.

\stitle{Ranked $\KNN$ Graph.}
We first introduce the ranked $\KNN$ graph, where $\mathbf{K}$ is a user-defined parameter chosen to be sufficiently large to support $k$NN search for arbitrary $k \le \mathbf{K}$.

\begin{definition}[Ranked $\KNN$ Graph]\label{def:knng}
Given a dataset $D \subset \mathbb{R}^d$ and an integer $\mathbf{K}$, the ranked $\KNN$ graph $\KNNG=(V,E)$ is a directed graph with vertex set $V=D$. Each vertex $o \in V$ stores an ordered \textbf{$\mathbf{K}$NN list} of its $\mathbf{K}$ nearest neighbors and their distances: $\KNNG[o] = \{(v_1,\delta(o,v_1)),\ldots,(v_{\mathbf{K}},\delta(o,v_{\mathbf{K}}))\}$, where $\delta(o,v_1) \le \ldots \le \delta(o,v_{\mathbf{K}})$.
A directed edge $o \rightarrow v_j$ exists if $v_j$ is the $j$-th nearest neighbor of $o$, i.e., $\KNNG[o,j]=v_j$.
For simplicity, we write $\KNNG[o,j]=v_j$, although the actual entry is $(v_j,\delta(o,v_j))$.
\end{definition}

The ranked $\KNNG$ captures nearest-neighbor relationships from the perspective of each point in $D$. Transposing these relationships yields reverse-neighbor lists, which record, for each vector, the points that regard it as one of their top-$\mathbf{K}$ nearest neighbors. Consequently, the R$k$NN results of any dataset vector can be obtained directly from its reverse-neighbor list for any $k \le \mathbf{K}$.

\begin{definition}[Reverse-Neighbor List]\label{defi:RR}
Given a ranked $\KNNG$, the reverse-neighbor list of a vector $o\in D$, denoted by $\mathbf{R}[o]$, is defined as $\mathbf{R}[o]
=
\{(v,j)\mid v\in D,\ \KNNG[v,j]=o,\ 1\le j\le \mathbf{K}\}$,
where $(v,j)$ indicates that $o$ is the $j$-th nearest neighbor of $v$.
\end{definition}

\begin{algorithm}[t]
\small
\caption{Ranked $\KNN$ Graph Construction}
\label{algo:knng}

\KwIn{Dataset $D$, integer $\mathbf{K}$}
\KwOut{Ranked $\KNN$ graph $\KNNG$}

Initialize $\KNNG[o]$ with $\mathbf{K}$ random points for eacn $o \in D$\;

\Repeat{no neighbor list changes}{
    Create $\mathbf{R}[o] \gets \{x \mid o \in \KNNG[x]\}$ for each $o \in D$\;
    
    \textbf{for each}{$o \in D$} \textbf{do}{
        $N[o] \gets \KNNG[o] \cup \mathbf{R}[o]$\;
    }
    
    \ForEach{$o \in D$}{
        \ForEach{neighbor pair $(u,v)$ in $N[o]$}{
            Compute $d=\delta(u,v)$\;
            Insert $(v,d)$ into $\KNNG[u]$ if improved\;
            Insert $(u,d)$ into $\KNNG[v]$ if improved\;
        }
    }
}
\end{algorithm}

\sstitle{Index construction.}
Exact construction of ranked $\KNN$ graph requires $O(|D|^2)$ distance computations, since the $\mathbf{K}$-nearest neighbors must be identified for every point in $D$. 
To improve efficiency, NNDescent~\cite{NN-Descent-WWW-2011} iteratively refines the $\mathbf{K}$NN list of each point based on the neighbors of its current neighbors. 
Algorithm~\ref{algo:knng} shows how to create the ranked $\KNN$ graph using NNDescent.
It first initializes the neighbor list $\KNNG[o]$ of each point $o \in D$ with $\mathbf{K}$ random points (Line~1). During each iteration, it constructs a reverse-neighbor list $\mathbf{R}[o]$ for every point $o$, which contains all points that currently include $o$ in their neighbor lists (Line~3). 
The forward and reverse neighbor lists are then merged to form the neighbor set $N[o]$ (Line~4). 
For each neighbor pair $(u,v)$ in $N[o]$ (Lines~5--6), the algorithm computes their distance $\delta(u,v)$ and attempts to insert $(v,d)$ into $\KNNG[u]$ and $(u,d)$ into $\KNNG[v]$ if the new neighbor improves the current ranked neighbor list (Lines~7--9).
Here, we say that $(v,d)$ updates $\KNNG[u]$ if the distance $d=\delta(u,v)$ is smaller than the distance between $u$ and at least one neighbor currently stored in $\KNNG[u]$.
The refinement process repeats until no neighbor list changes further (Line~10).
For simplicity, we omit several optimizations, such as neighbor sampling and lazy updates.

\stitle{Navigation Graph.}
Ranked $\KNN$ graphs mainly connect each point to its local neighbors and therefore lack long-range edges. 
As a result, greedy graph traversal from a random entry point can easily become trapped in local optima~\cite{GreedySearch-2011-IJCAI}. To address this limitation, navigation graphs~\cite{NSG-VLDB-2019-Fu}, such as the Hierarchical Navigable Small World ($\HNSW$) graph~\cite{HNSW-2020-TPAMI}, have been proposed.

\begin{definition}[$\HNSW$ Graph]\label{def:hnsw}
Given a dataset $D \subset \mathbb{R}^d$, an $\HNSW$ graph is an $L$-layer directed proximity graph $\HNSWG =\{G^{(0)},G^{(1)},\ldots,G^{(L)}\}$, where each layer $G^{(\ell)}=(V^{(\ell)},E^{(\ell)})$ is defined over a subset $V^{(\ell)} \subseteq D$, satisfying $V^{(L)} \subseteq V^{(L-1)} \subseteq \cdots \subseteq V^{(0)} = D$.
Each point $o \in D$ is assigned a maximum level $\lambda(o)$ and appears in all layers $G^{(0)},\ldots,G^{(\lambda(o))}$. 
At each layer $\ell \leq \lambda(o)$, point $o$ maintains a bounded sized neighbor list $G^{(\ell)}[o]=\{v_1,\ldots,v_M\}$,
where each $v_i \in V^{(\ell)}$ is selected using proximity-based pruning heuristics. 
A directed edge $o \rightarrow v_i$ exists in layer $G^{(\ell)}$ if $v_i \in G^{(\ell)}[o]$.
\end{definition}

\begin{algorithm}[t]
\small
\caption{\textsc{Graph-Search}$(G,q,k,ep,ef)$}
\label{alg:graph-search}

\KwIn{Graph $G$, query vector $q$, target number of results $k$, entry point $ep$, beam width $ef$}
\KwOut{Candidate set $W[q]$ of size $k$}

Initialize min-heap $Q \gets \{ep\}$ ordered by $\delta(q,\cdot)$\;
Initialize max-heap $W[q] \gets \{ep\}$ ordered by $\delta(q,\cdot)$\;
Mark $ep$ as visited\;

\While{$Q \neq \emptyset$}{
    Extract the closest vertex $v$ from $Q$\;
    Let $u$ be the farthest vertex in $W[q]$\;
    
    \textbf{if}{$\delta(q,v) > \delta(q,u)$} \textbf{then}{
        \textbf{break}\;
    }
    
    \ForEach{unvisited neighbor $o \in G[v]$}{
        Mark $o$ as visited\;
        
        \If{$|W[q]| < k$ \textbf{or} $\delta(q,o) < \delta(q,u)$}{
            Insert $o$ into both $Q$ and $W[q]$\;
            
            \If{$|W[q]| > k$}{
                Remove the farthest vertex from $W[q]$\;
            }
            \If{$|Q| > ef$}{
                Remove the farthest vertex from $Q$\;
            }
        }
    }
}

\Return{$W[q]$}
\end{algorithm}

\vspace{-1em}

\sstitle{Search on the $\HNSW$ graph.}
Given a query $q$, an $\HNSW$ graph $\HNSWG$, the target result size $k$, and a beam width $ef$, search on the $\HNSW$ index proceeds from the top layer to the bottom layer. 
Starting from the global entry point in the top layer, the algorithm first performs greedy search on each upper layer with beam width $ef=1$, so that only the closest candidate is retained and used as the entry point for the next lower layer. 
After reaching the bottom layer $G^{(0)}$, the algorithm performs beam search with a larger beam width $ef>1$ to retrieve the final $k$NN result for the query vector $q$.

The search within each layer $G^{(\ell)}$ follows Algorithm~\ref{alg:graph-search}. 
Given the current entry point $ep$, the algorithm initializes a candidate heap $Q$ and a result heap $W[q]$ with $ep$ (Lines~1--2), and marks it as visited (Line~3). 
It then repeatedly extracts the closest vertex $v$ from $Q$ (Line~5) and compares it with the farthest vertex $u$ in $W[q]$ (Line~6).
If $v$ is farther from $q$ than $u$, the search stops because no remaining candidate can improve the current result (Line~7). Otherwise, the algorithm scans each unvisited neighbor $o \in G^{(\ell)}[v]$ (Line~8), marks it as visited (Line~9), and inserts it into both $Q$ and $W[q]$. 
if $W[q]$ has fewer than $k$ vertices or $o$ is closer to $q$ than the current farthest candidate (Lines~10--11). 
If $W[q]$ or $Q$ exceeds their size limits, the farthest vertex is removed (Line~12--15). 
The search returns $W[q]$ as the result (Line~16).

\ifthenelse{\isundefined{\themacro}}{}{
\begin{figure}[t]
    \centering
    \includegraphics[width=1\linewidth]{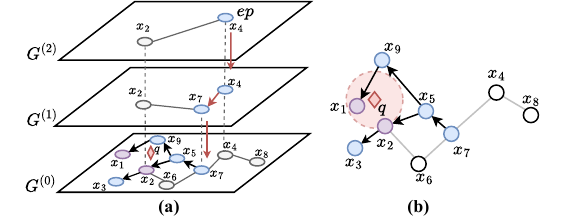}
     \vspace{-3em}
    \caption{Illustration of an $\HNSW$ graph and the graph search process.}
    \label{fig:hnsw-graph-search}
\end{figure}

\begin{example}
    Fig.~\ref{fig:hnsw-graph-search}(a) illustrates an $\HNSW$ graph with three layers, where $G^{(2)}$ is the top layer and $G^{(0)}$ is the bottom layer. Given a query vector $q$, the top-down routing phase starts from the global entry point $ep=x_4$ and performs greedy search ($ef=1$) on each layer. The search progressively descends through the hierarchy until reaching the bottom layer, where $x_7$ is identified as the entry point.
    Fig.~\ref{fig:hnsw-graph-search}(b) illustrates the bottom-layer beam search ($ef=2$). Starting from $x_7$, the search visits $x_5$ and inserts $x_2$ and $x_9$ into the candidate queue. Since $\delta(q,x_2)<\delta(q,x_9)$, $x_2$ is expanded first, followed by $x_9$, which discovers $x_1$. The final 2NN result is $W[q]={x_1,x_2}$ (shaded in purple).
\end{example}}

\vspace{-1em}
\section{Problem Analysis}\label{sec:problem-analysis}
We summarize existing methods for R$k$NN search in Section~\ref{sub:Existing}, and then analyze their two limitations in Section~\ref{sub:Limitation}. 
This analysis motivates our method presented in Section~\ref{sec:main-framework}.
For simplicity, we use the term R$k$NN search to refer to approximate R$k$NN search throughout the remainder of the paper unless otherwise specified.


\subsection{Existing Solutions}\label{sub:Existing}
The main challenge in answering the R$k$NN search problem lies in two aspects. First, unlike $k$NN search, where the result vectors are directly close to the query vector $q$, the vectors in the R$k$NN set do not necessarily have explicit proximity relationships with $q$. Specifically, R$k$NN search is defined from the perspective of dataset points $o \in D$: we need to determine whether $q$ belongs to the $k$NN set of points $o$ by checking the $k$NN-radius of $o$. Second, unlike $k$NN search, which always returns exactly $k$ vectors, the number of reverse $k$NN results for a query vector is unknown in advance.

To process the R$k$NN search problem for a query vector $q$ over dataset $D$, an intuitive approach is to directly follow the definition of R$k$NN. Specifically, the algorithm scans all points $o \in D$ and checks whether the distance $\delta(q,o)$ is no larger than the $k$NN-radius $r_k(o)$ of $o$, where $r_k(o)$ is defined as the distance from $o$ to its $k$-th nearest neighbor. 
If so, $o$ is included in the R$k$NN set of $q$.
However, this approach is computationally expensive for large-scale datasets.

\vspace{-0.3em}

\stitle{Filter-and-Verification Framework.}
To improve efficiency, existing methods typically adopt a filter-and-verification framework. Instead of examining all points in $D$, these methods first reduce the search space to a candidate set $C \subseteq D$, and then verify whether each candidate point belongs to the R$k$NN result set of the query vector $q$.
To obtain a compact candidate set $C$, existing methods commonly assume that \textbf{data points close to $q$ are more likely to regard $q$ as one of their $k$ nearest neighbors, and thus are more likely to belong to the R$k$NN result set of $q$.} Based on this assumption, methods such as $\SFT$~\cite{SFT-2003} and $\RDT$~\cite{RDT-2017-VLDB} first perform nearest-neighbor search and then examine nearby neighbors as candidates until a stopping condition is satisfied.

\input{figures/exp_non_locality}

\sstitle{The State-of-the-Art.}
More recently, $\HAMG$~\cite{HAMG-ICDE-2024} was proposed following the same intuition. A key insight of $\HAMG$ is that, on the Monotonic Relative Neighborhood ($\kw{MRN}$) graph~\cite{HNSW-2020-TPAMI, NSG-VLDB-2019-Fu, TauMG-2023-SIGMOD, SSSG-2022-TPAMI, DiskANN-NIPS-2019, Graph-index-survey-VLDB-2021-Wang, RoarGraph-VLDB-2024-Chen, NN-Descent-WWW-2011, Flash-Graph-Index-2025-SIGMOD, Graph-tree-index-survey-2023-IEEE-DataEngBull, AlphaCG-2026-SIGMOD-Lu, Graph-Theory-2023-Indyk, Fast-Graph-2025-SIGMOD-Lu, GreedySearch-2011-IJCAI, LSG-ICDE-2025-Wang, HVS-2021-VLDB, ELPIS-2023-VLDB, LMG-2025-ACMTDS-Xie}, the R$k$NN results of a query vector $q$ are guaranteed to lie within the $k$-hop neighborhood of $q$. Based on this property, $\HAMG$ treats these nearby graph neighbors as the candidate set $C$. Specifically, $\HAMG$ adopts the hierarchical structure of the ($\HNSW$) graph in Definition~\ref{def:hnsw}, while modifying the bottom layer $G^{(0)}$ to better approximate the $\kw{MRN}$ graph structure. Using this adapted ($\HNSW$) graph, $\HAMG$ retrieves the $k$-hop neighbors of $q$ as candidates and then verifies whether $\delta(q,o)\le r_k(o)$ for each candidate $o \in C$ to determine if $o$ belongs to the R$k$NN set of $q$.
More details can be found in~\cite{HAMG-ICDE-2024}.

\ifthenelse{\isundefined{\themacro}}{\vspace{-1.5em}}{}

\subsection{Limitation Analysis}\label{sub:Limitation}
Existing methods avoid scanning the entire dataset $D$ to retrieve the R$k$NN results of a query vector $q$. Yet, they still suffer from two major limitations that reduce their efficiency on large datasets.

\begin{figure}[t]
    \centering
    \captionsetup[sub]{skip=-0.4em} 
    \begin{small}
    \begin{tikzpicture}
    \begin{customlegend}[legend columns=2,
    legend entries={SIFT, MSMARCO},
    legend style={at={(0.5,1.15)},anchor=north,draw=none,font=\scriptsize,column sep=0.3cm}]
    \addlegendimage{area legend,fill=navy!70,draw=navy}
    \addlegendimage{area legend,fill=orange!70,draw=orange}
    \end{customlegend}
    \end{tikzpicture}
    \\[-\lineskip]
\subfloat[Total query latency (ms)]{
    \begin{tikzpicture}
    \begin{axis}[
        ybar, ymode=log, log origin=infty,
        width=0.48\columnwidth, height=2.5cm,
        bar width=3pt,
        symbolic x coords={SFT,RDT,HAMG,HRNN}, xtick=data,
        ymin=0.5, ymax=8000,
        ytick={1,10,100,1000},
        tick label style={font=\scriptsize}, xticklabel style={rotate=45,anchor=east,font=\fontsize{5}{5}\selectfont,yshift=1pt},
        ymajorgrids=true, grid style=dashed, enlarge x limits=0.20,
        nodes near coords, point meta=explicit symbolic,
        nodes near coords style={font=\fontsize{5}{5}\selectfont,rotate=90,anchor=west},
    ]
    \addplot+[fill=navy!70,draw=navy]   coordinates {(SFT,18.0)[18] (RDT,32.2)[32] (HAMG,119.5)[120] (HRNN,2.05)[2.1] };
    \addplot+[fill=orange!70,draw=orange] coordinates {(SFT,56.9)[57] (RDT,127)[127] (HAMG,471)[471] (HRNN,4.11)[4.1]};
    \end{axis}
    \end{tikzpicture}}\hspace{0.5cm}
\subfloat[Verification cost ($\mu$s/cand.)]{
    \begin{tikzpicture}
    \begin{axis}[
        ybar, ymode=log, log origin=infty,
        width=0.48\columnwidth, height=2.5cm,
        bar width=3pt,
        symbolic x coords={SFT,RDT,HAMG,HRNN}, xtick=data,
        ymin=0.05, ymax=40000,
        ytick={0.1,10,1000},
        tick label style={font=\scriptsize}, xticklabel style={rotate=45,anchor=east,font=\fontsize{5}{5}\selectfont,yshift=1pt},
        ymajorgrids=true, grid style=dashed, enlarge x limits=0.20,
        nodes near coords, point meta=explicit symbolic,
        nodes near coords style={font=\fontsize{5}{5}\selectfont,rotate=90,anchor=west},
    ]
    \addplot+[fill=navy!70,draw=navy]   coordinates {(SFT,322)[322] (RDT,347)[347] (HAMG,550)[550] (HRNN,0.14)[0.14]};
    \addplot+[fill=orange!70,draw=orange] coordinates {(SFT,1186)[1186] (RDT,1277)[1277] (HAMG,3542)[3542] (HRNN,0.38)[0.38] };
    \end{axis}
    \end{tikzpicture}}
    \end{small}
    \vspace{-0.4cm}
    \caption{Comparison of query performance for approximate R$k$NN search with $k=10$ at Recall@10 = 0.99. (a) shows the total query latency of $\HRNN$ and the baselines ($\SFT$, $\RDT$, and $\HAMG$). (b) reports the per-candidate verification cost of $\HRNN$ and the baselines.}
    \label{fig:decomp-cost}
\end{figure}

\vspace{-0.3em}
\stitle{Limitation 1: The mismatch between R$k$NN and nearby neighbors.}
Existing methods rely on the assumption that nearby neighbors of $q$ (e.g., nearest neighbors in $\SFT$ and $\RDT$, or graph neighbors within $k$ hops in $\HAMG$) contain all R$k$NN results. Although this assumption generally holds, it can still lead to high query latency in practice. As illustrated in the introduction (Fig.~\ref{fig:motivation}), when an object $o \in D$ has a large $k$NN-radius, the query vector $q$ may still belong to the $k$NN set of $o$ even if $o$ is far away from $q$. Consequently, the search algorithm must traverse a large number of nearby neighbors before reaching $o$. Even for the state-of-the-art $\HAMG$, the number of $k$-hop graph neighbors can become extremely large due to the graph expansion effect. As a result, a substantial amount of computation spent exploring nearby neighbors is unnecessary.

To further validate this limitation, we report the statistics in Fig.~\ref{fig:nonlocality-cdf}. We evaluate three datasets, GIST, SIFT, and MSMARCO, with dimensions 960, 128, and 1024, respectively, and set $k=10$ for R$k$NN search. For each query $q$, we compute its exact R$k$NN result set. For every object in the result set, we measure its rank in the nearest-neighbor ordering of $q$, where the closest neighbor has rank $1$ and the farthest point has rank $|D|$. We then compute the cumulative distribution function (CDF) of these rankings. The results show that all three datasets require exploring a large number of nearby neighbors to retrieve all R$k$NN results. For example, on SIFT, more than $50\%$ of queries require scanning over $10^3$ nearby neighbors to retrieve all R$k$NN results. More critically, some queries require exploring an extremely large neighborhood region. On GIST, certain queries need to scan more than $10^6$ nearby neighbors before all R$k$NN results can be identified. These observations indicate that relying solely on nearby neighbors is inefficient for R$k$NN search.

\stitle{Limitation 2: High Verification Cost.}
The first limitation implies that existing R$k$NN methods often explore a large number of unnecessary neighbors during the filter stage, producing a candidate set $C$ that is substantially larger than the true R$k$NN result set. 
Then, for each candidate vector $o \in C$, the algorithm must determine whether $\delta(o,q)\le r_k(o)$. If the condition does not hold, then $q$ is not among the $k$ nearest neighbors of $o$, and $o$ is discarded; otherwise, $o$ is reported as an R$k$NN result. Consequently, efficient computation of the $k$NN-radius $r_k(o)$ of $o$ is critical to R$k$NN search performance.
Obtaining this value requires issuing a $k$NN query centered at $o$, as precomputing $k$NN-radius and supporting queries for arbitrary $k$ values necessitate either prohibitive preprocessing cost or even complete index reconstruction in existing indexes \cite{SFT-2003, RDT-2017-VLDB, HAMG-ICDE-2024, Rdnn-tree-2001-ICDE}. 
Besides, since the candidate set $C$ is not known in advance and can be very large in practice, existing methods incur substantial overhead by computing $k$NN-radius online during query processing.

To quantify this limitation, we conduct experiments on the real-world SIFT and MSMARCO datasets. We compare the runtime of approximate R$k$NN search across $\HRNN$ and the baseline methods ($\SFT$, $\RDT$, and $\HAMG$), with the results reported in Fig.~\ref{fig:decomp-cost}. As shown in Fig.~\ref{fig:decomp-cost}(a), $\HRNN$ achieves up to one order of magnitude lower query latency than the baselines. Fig.~\ref{fig:decomp-cost}(b) further reveals the underlying reason: existing methods spend significantly more time verifying each candidate because they must compute the $k$NN-radius online. These results demonstrate that online $k$NN-radius computation is a major bottleneck for existing R$k$NN methods.

\vspace{-0.5em}
\section{The Proposed Method} \label{sec:main-framework}
We first introduce the proposed index structure in Section~\ref{sub:Structure}, followed by the query processing algorithm in Section~\ref{sub:Query}. 
Then, Section~\ref{sub:Construction} describes the index construction procedure, and Section~\ref{sub:maintenance} presents the append-only index maintenance method.

\vspace{-1em}

\subsection{Index Structure}\label{sub:Structure}
We now present the key ideas for addressing the two limitations of existing R$k$NN methods.

\stitle{Idea 1.}
To address Limitation 1, we revisit the relationship between $k$NN and R$k$NN search. 
Existing methods often assume that nearby vectors tend to share common properties, and therefore treat the nearest neighbors of the query as candidates for R$k$NN search. However, as shown in our analysis, nearby neighbors are often not true R$k$NN results. Instead, we make a different assumption: \textbf{a query's R$k$NN results can often be discovered through the R$k$NN results of its nearby neighbors.}
This intuition is empirically validated in Fig.~\ref{fig:witness-rank-cdf}.

\input{figures/proxy_knn_rank_at_gt}

This assumption follows from spatial proximity: as $q$ and its nearby neighbors lie close to each other, they are likely to fall within the $k$NN radius of the same point $x$. 
Thus, if $x$ is an R$k$NN of $q$'s nearby neighbors, it is also likely to appear in the R$k$NN sets of $q$.
Based on this intuition, we first retrieve a set of nearby vectors as \emph{proxies}. 
Rather than treating these proxies as R$k$NN candidates, we leverage their reverse-neighbor lists, which provide their R$k$NN results for arbitrary $k$, to generate candidates for the query. 
This design avoids excessive neighborhood expansion while keeping high recall.

\stitle{Idea 2.}
To address Limitation 2, we eliminate the expensive online computation of $k$NN-radius during verification. Recall that the $k$NN-radius of a vector $o \in D$ is simply the distance between $o$ and its $k$-th nearest neighbor. Thus, if we materialize a ranked neighbor list for each vector and store the corresponding distances in ascending order, the $k$NN-radius for any value of $k$ can be retrieved directly through a lookup. This transforms verification from an expensive online search into a lightweight index-access operation.

\stitle{Index Design.}
To realize the above intuitions, the index must support three operations: (1) efficiently retrieving nearby neighbors of a query vector as proxies, (2) obtaining the reverse-neighbor lists of proxy points to generate R$k$NN candidates for the query, and (3) maintaining ranked neighbor lists for all points in the dataset so that $k$NN-radius can be accessed efficiently during verification.

Designing a single index that satisfies all these requirements is non-trivial. For example, the $\HNSW$ graph is highly effective for nearest-neighbor retrieval but does not maintain ranked neighbor lists. In contrast, a ranked $\KNN$ graph provides ranked neighbor information but is inefficient for query-time nearest-neighbor search. Moreover, neither structure directly supports efficient access to the R$k$NN sets of data vectors.
To address these challenges, we propose $\HRNN$, a hybrid index that integrates an $\HNSW$ graph, a ranked $\KNN$ graph, and reverse-neighbor lists.

\begin{definition}[$\HRNN$ Index]\label{defi:hrnn}
Given a dataset $D$ and a user-defined parameter $\mathbf{K}$, the $\HRNN$ index is defined as $\mathcal{I}=(\HNSWG, \KNNG, \mathbf{R})$, where:

(1) $\HNSWG$ is an $\HNSW$ graph constructed over $D$ (Definition~\ref{def:hnsw});

(2) $\KNNG$ is a ranked $\KNN$ graph constructed over $D$ with parameter $\mathbf{K}$ (Definition~\ref{def:knng});

(3) $\mathbf{R}$ is the reverse-neighbor lists derived from $\KNNG$. For each vector $o \in D$, its reverse-neighbor list $\mathbf{R}[o]$ stores all pairs $(v,j)$ such that $o$ is the $j$-th nearest neighbor of $v$ in $\KNNG$ (Definition~\ref{defi:RR}).
\end{definition}

\begin{figure}[t]
    \centering
    \includegraphics[width=\linewidth]{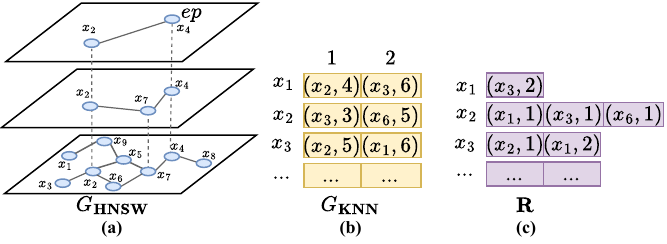}
    \vspace{-0.8cm}
    \caption{An example of the $\HRNN$ index structure.}
    \label{fig:hrnn-structure}\vspace{1em}
\end{figure}

\vspace{-0.8em}

\begin{example}
Fig.~\ref{fig:hrnn-structure} shows the components of $\HRNN$. Fig.~\ref{fig:hrnn-structure}(a) shows the $\HNSW$ graph. Fig.~\ref{fig:hrnn-structure}(b) presents the ranked $\KNN$ graph, where each node $o$ maintains a list of $\mathbf{K}$-nearest neighbors ($\mathbf{K}=2$), and each entry is denoted as $(v_i,\delta(o,v_i))$. For instance, $x_2$ is the 1NN of $x_3$ with $\delta(x_3,x_2)=5$. Fig.~\ref{fig:hrnn-structure}(c) shows the reverse-neighbor lists derived from $\KNNG$. For example, since $x_3$ is the 1NN of $x_2$ and the 2NN of $x_1$, its reverse-neighbor list is $\mathbf{R}(x_3)=\{(x_2,1),(x_1,2)\}$, which records these reverse-neighbor relationships.
\end{example}

\vspace{-0.8em}

\begin{theorem}
The $\HRNN$ index requires $O(|D|\cdot M \cdot L + |D|\cdot \mathbf{K})$ space, where $L$ is the number of $\HNSW$ layers, $M$ is the maximum degree of the $\HNSW$ graph, and $\mathbf{K}$ is the number of ranked neighbors materialized per point.
\end{theorem}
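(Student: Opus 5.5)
The plan is to bound the storage of each of the three components of $\mathcal{I}=(\HNSWG,\KNNG,\mathbf{R})$ in Definition~\ref{defi:hrnn} separately and then add the bounds. The raw vectors themselves ($O(|D|\cdot d)$) are taken as input data and not counted, matching the form of the statement.

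\emph{Step 1: the $\HNSW$ graph.} By Definition~\ref{def:hnsw}, each point $o\in D$ appears in layers $G^{(0)},\ldots,G^{(\lambda(o))}$ with $\lambda(o)\le L$. At each such layer it stores a neighbor list of size at most $M$. Hence $o$ contributes at most $(L+1)M$ edge entries plus $O(L)$ bookkeeping. Summing over $D$ gives $O(|D|\cdot M\cdot L)$. This is a worst-case bound: the usual geometric level assignment would give an expected $O(|D|\cdot M)$, but the stated bound does not need that refinement.

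\emph{Step 2: the ranked $\KNN$ graph.} By Definition~\ref{def:knng}, each vertex stores exactly $\mathbf{K}$ pairs $(v_j,\delta(o,v_j))$, each of constant size. This gives $O(|D|\cdot\mathbf{K})$.

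\emph{Step 3: the reverse-neighbor lists.} This is the only step that needs an argument, because an individual list $\mathbf{R}[o]$ can be as large as $|D|-1$ (a hub point). So I will not bound the lists one by one. Instead I bound their total size by double counting. By Definition~\ref{defi:RR}, each pair $(v,j)\in\mathbf{R}[o]$ corresponds bijectively to the entry $\KNNG[v,j]=o$, i.e., to a directed edge of $\KNNG$. Therefore
\[
\sum_{o\in D}|\mathbf{R}[o]| \;=\; \sum_{v\in D}|\KNNG[v]| \;=\; |D|\cdot\mathbf{K}.
\]
Each entry has constant size, and the per-list headers add $O(|D|)$, so $\mathbf{R}$ takes $O(|D|\cdot\mathbf{K})$ space in total.

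Adding the three bounds gives $O(|D|\cdot M\cdot L+|D|\cdot\mathbf{K})$, as claimed.
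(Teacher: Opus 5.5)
Your proposal is correct and follows the same route as the paper's proof. Like the paper, you bound the HNSW graph by $M$ entries per point per layer, bound the ranked KNN graph by $\mathbf{K}$ entries per point, and bound the reverse-neighbor lists by the transposition correspondence, which yields exactly $|D|\cdot\mathbf{K}$ postings in total. Your extra remarks (the $(L+1)$ layer count, the per-list headers, and avoiding a per-list bound for hub points) are harmless refinements of the same argument.
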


\ifthenelse{\isundefined{\themacro}}{}{
\begin{proof}
The $\HRNN$ index consists of three components. First, the $\HNSW$ graph stores at most $M$ outgoing neighbors for each vector at each layer. Since there are at most $|D|$ vectors and $L$ layers, the space required by the $\HNSW$ graph is bounded by $O(|D|\cdot M \cdot L)$. Second, the ranked $\KNN$ graph stores $\mathbf{K}$ ranked neighbors for each vector, requiring $O(|D|\cdot \mathbf{K})$ space. Third, the reverse-neighbor lists are obtained by transposing the ranked $\KNN$ graph. Each ranked neighbor entry generates exactly one reverse posting, yielding $|D|\cdot \mathbf{K}$ reverse postings in total and therefore requiring $O(|D|\cdot \mathbf{K})$ space. Summing the three components gives $O(|D|\cdot M \cdot L + 2|D|\cdot \mathbf{K}) = O(|D|\cdot M \cdot L + |D|\cdot \mathbf{K})$.
\end{proof}}

\vspace{-1em}

\subsection{Query Processing}\label{sub:Query}
We now describe how to answer an approximate R$k$NN query for a query vector $q$ using the proposed $\HRNN$ index
$\mathcal{I}=(\HNSWG, \KNNG, \mathbf{R})$.
If the query vector $q$ were part of the indexed dataset $D$, its reverse-neighbor list $\mathbf{R}[q]$ would directly reveal the points whose neighborhoods contain $q$. 
However, in practice, query vectors are typically not part of the indexed dataset.
To address this issue, $\HRNN$ first retrieves a set of nearby points as \emph{proxies} and uses their reverse-neighbor lists to approximate the unavailable reverse-neighborhood information of $q$.

\stitle{Filter.}
Let $N_m(q)=\{b_1,\ldots,b_m\}$ denote the approximate $m$-nearest neighbors of $q$ returned by the $\HNSW$ graph. 
Rather than treating these points as R$k$NN candidates, $\HRNN$ uses them to discover candidate points. 
To avoid scanning excessively large reverse-neighbor lists $\mathbf{R}[b]$ of the proxy point $b \in N_m(q)$, we introduce a rank threshold $\Theta$ and only consider vectors whose ranks do not exceed $\Theta$. 
This design is based on the intuition that if the rank of an object $v$ with respect to a proxy $b$ exceeds $\Theta$, then $b$---and thus the query---is unlikely to be within the top-$\Theta$ nearest neighbors of $v$. 
Hence, $v$ can be safely pruned, as it is unlikely to be an R$\Theta$NN of $q$.
Formally, the candidate set of query $q$ is defined as
\begin{equation}
C_{m,\Theta}(q)
=
\bigcup_{b \in N_m(q)}
\{v \mid (v,j)\in \mathbf{R}[b],\ j\le \Theta\}.
\end{equation}

A larger $\Theta$ increases candidate coverage and potentially improves recall, but incurs higher candidate-generation and verification costs. 
Note that $\mathbf{K}$ is the number of neighbors materialized in a ranked $\KNN$ graph, while $\Theta\leq\mathbf{K}$ is a query-time parameter controlling the number of entries scanned from each reverse-neighbor list.

\stitle{Verification.}
After candidate generation, $\HRNN$ verifies whether each candidate $o\in C_{m,\Theta}(q)$ belongs to the final R$k$NN result set of $q$. 
Existing methods typically compute the $k$NN-radius of each candidate online, which incurs substantial overhead. 
In contrast, $\HRNN$ leverages the pre-materialized ranked $\KNN$ graph $\KNNG$. 
Since the graph stores the top-$\mathbf{K}$ nearest neighbors of every vertex and $\mathbf{K}\geq k$, the $k$-th nearest neighbor of $o$ can be directly retrieved from $\KNNG[o]$. 
Let $v_k$ denote this neighbor. 
The corresponding distance $\hat r_k(o)=\delta(o,v_k)$ serves as an estimate of the $k$NN-radius of $o$. Using this pre-materialized radius, a candidate $o$ is accepted if $\delta(q,o)\leq \hat r_k(o)$.
This verification process reduces candidate validation to a single distance comparison.

\begin{algorithm}[t]
\caption{Approximate R$k$NN Search}\label{algo:search}
\small
\KwIn{Index $\mathcal{I}=(\HNSWG, \KNNG, \mathbf{R})$, query vector $q$, target parameter $k$, proxy size $m$, rank threshold $\Theta$, $\HNSW$ search parameters $ep$ and $ef_s$}
\KwOut{AR$k$NN result set $R$}

$R \gets \emptyset,\ C \gets \emptyset$\;

$N_m(q) \gets \textsc{Graph-Search}(\HNSWG, q, m, ep, ef_s)$\;

\For{$b \in N_m(q)$}{
    \For{$(v,j) \in \mathbf{R}(b)$ in ascending order of $j$}{
        \lIf{$j > \Theta$}{break}
        $C \gets C \cup \{v\}$\;
    }
}

\For{$o \in C$}{
    $v_k \gets \KNNG[o,k]$\;
    $\hat{r}_k(o) \gets \delta(o,v_k)$\;
    \textbf{if} {$\delta(q,o) \le \hat{r}_k(o)$} \textbf{then}{
        $R \gets R \cup \{o\}$\;
    }
}
\Return{$R$}
\end{algorithm}

\stitle{Algorithm.}
Algorithm~\ref{algo:search} presents the procedure for answering an approximate R$k$NN query for a query vector $q$ using the hybrid index $\mathcal{I}=(\HNSWG, \KNNG, \mathbf{R})$.
The algorithm first performs an approximate $m$-nearest-neighbor search on $\HNSWG$ to retrieve a set of proxy vertices $N_m(q)$ that are close to $q$ (Line~2). The details of the graph search procedure are provided in Algorithm~\ref{alg:graph-search}.
For each proxy point $b \in N_m(q)$, the algorithm scans its reverse-neighbor list $\mathbf{R}(b)$ in ascending rank order (Lines~3--4). Only vertices whose ranks do not exceed the threshold $\Theta$ are inserted into the candidate set $C$ (Lines~5--6). This proxy-guided reverse lookup generates candidate R$k$NN results without expensive neighborhood expansion.
The algorithm then verifies each candidate $o \in C$ (Line~7). Specifically, it retrieves the $k$-th nearest neighbor $v_k$ of $o$ from $\KNNG$ and obtains the corresponding estimated $k$NN-radius $\hat{r}_k(o)=\delta(o,v_k)$ (Lines~8--9). 
A candidate is accepted if $\delta(q,o)\leq \hat{r}_k(o)$ (Line~10).
All verified candidates are added to the result set $R$, which is returned as the final R$k$NN result (Line~11).

\begin{figure}
    \centering
    \includegraphics[width=1\linewidth]{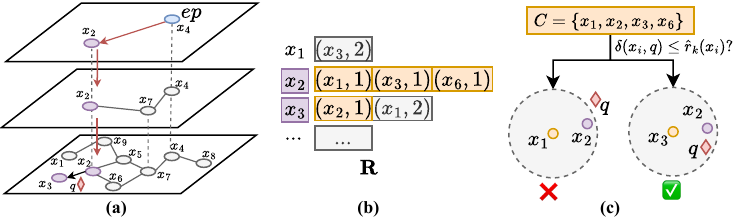}
    \vspace{-0.8cm}
    \caption{An example of the $\HRNN$ search process.}
    \label{fig:hrnn-search}
\end{figure}

\begin{example}
    Fig.~\ref{fig:hrnn-search} illustrates $\HRNN$ query processing for a query vector $q$, with $m=2$ and $\Theta=1$. 
    The search first traverses $\HNSWG$ and retrieves $x_2$ and $x_3$ as proxy vertices, as shown in Fig.~\ref{fig:hrnn-search}(a). 
    As illustrated in Fig.~\ref{fig:hrnn-search}(b), $\HRNN$ then scans the reverse-neighbor lists of the proxies in ascending rank order and inserts vertices whose rank does not exceed $\Theta$ into the candidate set $C$. 
    In this example, $x_1$, $x_2$, $x_3$, and $x_6$ are added to $C$. Finally, each candidate is verified by checking whether $q$ falls within its $k$NN-radius, which is obtained directly from $\KNNG$. 
    Candidates satisfying this condition (e.g., $x_3$) are returned as R$k$NN results.
\end{example}

\begin{theorem}
Given a query vector $q$, Algorithm~\ref{algo:search} answers the
R$k$NN query in expected time $O\!\left(
\log |D| + m + s(q) + d \cdot u(q)
\right)$, where $m$ is the proxy size, $s(q)$ is the number of scanned reverse-neighbor entries, and $u(q)$ is the number of distinct candidates after deduplication (i.e., the size of $C_{m,\Theta}(q)$ in Equation~1).
\end{theorem}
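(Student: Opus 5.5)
The plan is to decompose Algorithm~\ref{algo:search} into its three phases and bound each separately, then sum the bounds.

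\textbf{Phase 1: proxy retrieval (Line~2).}
For the $\HNSW$ search I would invoke the standard expected-complexity assumption for hierarchical navigable small-world graphs~\cite{HNSW-2020-TPAMI}. Under this assumption the top-down greedy routing visits $O(\log |D|)$ vertices in expectation, since the expected number of layers is $O(\log|D|)$ and each layer takes $O(1)$ expected greedy steps. Treating $M$, $ef_s$ and the per-distance dimension factor as constants of the navigation phase, as the statement implicitly does, the bottom-layer beam search contributes $O(m)$ to assemble and return the $m$ proxies. This gives $O(\log|D| + m)$.

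I expect this step to be the main obstacle. The logarithmic bound is a heuristic or expected-case property of $\HNSW$, not a worst-case guarantee. I would therefore state it explicitly as an assumption inherited from~\cite{HNSW-2020-TPAMI} rather than try to prove it. I would also note that the $d$-factor of navigation distance computations is absorbed under that model.

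\textbf{Phase 2: candidate generation (Lines~3--6).}
Assume each $\mathbf{R}[b]$ is stored sorted by rank $j$; this is established at construction time, e.g.\ by bucketed transposition of $\KNNG$. Each proxy then costs $O(1)$ to start plus one step per scanned entry. Each scanned entry is an $O(1)$ check and, on success, an $O(1)$ expected hash-set insertion into $C$, which also performs deduplication. The break at $j>\Theta$ costs at most one extra entry per proxy, which is $O(m)$ in total. So this phase costs $O(m + s(q))$, where $s(q)$ counts all scanned entries including duplicates.

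\textbf{Phase 3: verification (Lines~7--10).}
The loop runs over the $u(q)=|C_{m,\Theta}(q)|$ distinct candidates. For each one, $\KNNG[o,k]$ is a direct array access, and the stored distance makes $\hat r_k(o)$ available in $O(1)$; recomputing it would also be fine, at $O(d)$. The single computation of $\delta(q,o)$ costs $O(d)$, and the comparison plus insertion into $R$ costs $O(1)$. This gives $O(d\cdot u(q))$.

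Summing the three phases yields $O(\log|D| + m + s(q) + d\cdot u(q))$ in expectation. The expectation comes from the $\HNSW$ routing and from hashing.
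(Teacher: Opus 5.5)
Your proposal is correct and follows the same three-phase decomposition as the paper's proof: HNSW proxy retrieval in $O(\log|D|+m)$ expected time, rank-truncated reverse-list scanning in $O(s(q))$, and one $O(d)$ distance computation plus a constant-time radius lookup for each of the $u(q)$ deduplicated candidates. Your additional care only makes explicit what the paper leaves implicit: flagging the logarithmic HNSW bound as an inherited expected-case assumption with $d$, $M$, $ef_s$ absorbed, charging the per-proxy break to the $m$ term, and accounting for hash-based deduplication.
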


\ifthenelse{\isundefined{\themacro}}{}{
\begin{proof}
Proxy retrieval over the $\HNSW$ graph requires $O(\log |D| + m)$ expected time. Candidate generation scans $s(q)$ reverse-neighbor entries and therefore takes $O(s(q))$ time. 
After deduplication, the algorithm verifies $u(q)$ candidates.
Each verification consists of one exact distance computation and one constant-time radius lookup, requiring $O(d)$ time. Thus, verification costs $O(d \cdot u(q))$. Summing the three terms yields
the stated bound.
\end{proof}}

\subsection{Index Construction}\label{sub:Construction}
Constructing the $\HRNN$ index $\mathcal{I}=(\HNSWG, \KNNG, \mathbf{R})$ is challenging because it consists of three coupled structures: the $\HNSW$ graph, the ranked $\KNN$ graph, and the reverse-neighbor lists. A straightforward approach is to construct them independently, but this would incur redundant neighborhood exploration and repeated nearest-neighbor computations.

To avoid this overhead, $\HRNN$ adopts a unified construction pipeline. It first builds the $\HNSW$ graph and records the neighbors discovered during the bottom-layer searches for each inserted vector. Since these neighbors already approximate the local neighborhood of the vector, they provide high-quality seeds for initializing its ranked $\KNN$ list. $\HRNN$ then applies NNDescent to refine these initial neighborhoods and construct a high-quality ranked $\KNN$ graph. Compared with random initialization, this strategy improves convergence and reduces construction cost. Finally, the reverse-neighbor lists are derived by transposing the ranked $\KNN$ graph: if a vector $u$ is the $j$-th nearest neighbor of a vector $v$, the pair $(v,j)$ is inserted into the reverse-neighbor list of $u$. In this way, $\HRNN$ constructs all three index structures efficiently while maximizing the reuse of neighborhood information across different index views.

\begin{algorithm}[t]
\caption{$\HRNN$ Index Construction}\label{alg:build}
\small
\KwIn{
Dataset $D \subset \mathbb{R}^d$;
$\HNSW$ parameters $M$, $ef_c$, and $m_L$;
ranked KNN size $\mathbf{K}$;
NNDescent iterations $T$
}

\KwOut{
Index $\mathcal{I}=(\HNSWG, \KNNG, \mathbf{R})$
}

\tcp{Phase 1: Build the $\HNSW$ graph}
$(\HNSWG, W) \gets \textsc{Build-HNSW}(D, M, ef_c, m_L)$\;

\tcp{Phase 2: Construct the ranked KNN graph}
Initialize $\KNNG[o]$ using $W[o]$ for each $o \in D$\;

Execute Lines~2--11 of Algorithm~\ref{algo:knng} to build $\KNNG$\;

\tcp{Phase 3: Construct reverse-neighbor lists}
Initialize $\mathbf{R}(o)\gets\emptyset$ for all $o\in D$\;

\ForEach{$o\in D$}{
    \For{$j\gets 1$ \KwTo $\mathbf{K}$}{
        $v\gets \KNNG[o,j]$\;
        Append $(o,j)$ to $\mathbf{R}(v)$\;
    }
}

\ForEach{$o\in D$}{
    Sort $\mathbf{R}(o)$ in ascending order of rank $j$\;
}

\Return{$\mathcal{I}=(\HNSWG, \KNNG, \mathbf{R})$}
\end{algorithm}

\stitle{Algorithm.}
Algorithm~\ref{alg:build} constructs the $\HRNN$ index
$\mathcal{I}=(\HNSWG,\KNNG,\mathbf{R})$ in three phases.
First, $\HRNN$ builds the $\HNSW$ graph $\HNSWG$ over the dataset $D$ (Line~1). 
During $\HNSW$ construction, each point is inserted incrementally. For a newly inserted point $o$, $\HNSW$ first performs graph search from the current entry point to retrieve a set of approximate nearest neighbors using Algorithm~\ref{alg:graph-search}. 
These searched neighbors are then used to connect $o$ to existing vertices in different $\HNSW$ layers. 
Meanwhile, $\HRNN$ records the bottom-layer search results as $W[o]$, which provides high-quality initial neighbor candidates for constructing the ranked $\KNN$ graph.
Second, $\HRNN$ constructs the ranked $\KNN$ graph $\KNNG$ (Lines~2--3). 
Instead of randomly initializing the neighbor list of each vertex, $\HRNN$ initializes $\KNNG[o]$ using $W[o]$, i.e., the candidate neighbors obtained during $\HNSW$ construction. 
Then, Lines~2--11 of Algorithm~\ref{algo:knng} are executed to refine these initial lists via NNDescent to build a $\KNNG$.
Third, $\HRNN$ builds the reverse-neighbor lists $\mathbf{R}$ from $\KNNG$ (Lines~4). 
For each vertex $o$ and each rank position $j$ (Line~5--6), the algorithm obtains the $j$-th nearest neighbor $v=\KNNG[o,j]$ (Line~7) and appends $(o,j)$ to $\mathbf{R}(v)$ (Line~8).
Finally, each reverse-neighbor list is sorted by rank $j$ (Line~9--10).

\ifthenelse{\isundefined{\themacro}}{}{
\begin{figure}
    \centering
    \includegraphics[width=1\linewidth]{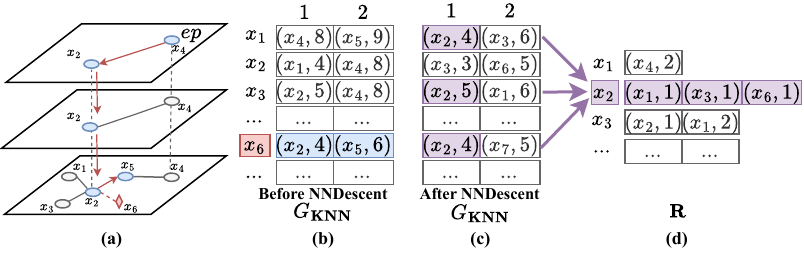}
    \vspace{-0.8cm}
    \caption{An example of $\HRNN$ index construction.}
    \label{fig:hrnn-indexing}
\end{figure}

\begin{example}
    Fig.~\ref{fig:hrnn-indexing} illustrates the construction process of the $\HRNN$ index. 
    Fig.~\ref{fig:hrnn-indexing}(a) shows the  search performed when inserting a new point $x_6$ into $\HNSWG$, and $x_6$ is connected to $x_2$ after pruning the search results. 
    Other points are inserted similarly. 
    The search result $[x_2, x_3]$ is recorded and used to initialize the ranked $\KNN$ graph with $\mathbf{K}=2$, as shown in Fig.~\ref{fig:hrnn-indexing}(b). 
    Fig.~\ref{fig:hrnn-indexing}(c) shows how NNDescent refines the ranked $\KNN$ graph. 
    For example, in $x_6$'s neighbor list, the initial 2NN $x_3$ is replaced by the closer neighbor $x_7$. 
    Finally, Fig.~\ref{fig:hrnn-indexing}(d) constructs the reverse-neighbor lists by scanning the refined ranked $\KNN$ graph. For instance, since $x_2$ appears as the 1NN of $x_1$, $x_3$, and $x_6$, its reverse-neighbor list is $\mathbf{R}(x_2)=[(x_1,1),(x_3,1),(x_6,1)]$.
\end{example}}

\vspace{-0.5em}
\begin{theorem}
Algorithm~\ref{alg:build} constructs the $\HRNN$ index in expected time $O\left(
|D|  \cdot \log |D|
+
T \cdot |D| \cdot \mathbf{K}^{2} \cdot d
+
|D| \cdot \mathbf{K}
\right)$,
where $d$ is the vector dimension, 
$\mathbf{K}$ is the ranked neighbor-list size, and $T$ is the number of NNDescent refinement iterations.
\end{theorem}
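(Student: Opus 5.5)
The plan is to follow the three phases of Algorithm~\ref{alg:build} and bound each one separately, then add the bounds. The same style of accounting as in the proofs of the space and query-time theorems applies: each phase is costed as the number of elementary operations times their unit cost. Distance evaluations cost $O(d)$. List insertions and appends cost $O(1)$, or $O(\log \mathbf{K})$ if we maintain sorted lists, which the final bound absorbs.

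\textbf{Phase 1: building $\HNSWG$.} I would invoke the standard $\HNSW$ analysis~\cite{HNSW-2020-TPAMI}. Each insertion descends through an expected $O(\log |D|)$ layers. The number of hops and candidates examined per layer is bounded by constants depending on $M$ and $ef_c$. Treating $M$, $ef_c$, $m_L$ (and, as the statement implicitly does, $d$ in this term) as constants gives an expected $O(\log|D|)$ cost per insertion, hence $O(|D|\log|D|)$ in total. Recording $W[o]$ costs only $O(ef_c)$ extra per point, since it is just the bottom-layer result already computed.

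\textbf{Phase 2: NNDescent refinement.} Initializing $\KNNG[o]$ from $W[o]$ takes $O(\mathbf{K})$ per point, which is negligible. For each of the $T$ iterations I would bound the work of Lines~2--11 of Algorithm~\ref{algo:knng}. The total size of all forward lists is $|D|\mathbf{K}$, so the reverse lists also total $|D|\mathbf{K}$ entries. The joint set $N[o]$ is processed pairwise, so one iteration costs $\sum_o |N[o]|^2$ distance evaluations, each $O(d)$.

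This is the main obstacle: reverse lists can be highly unbalanced, since hubs may have huge in-degree, so $\sum_o |N[o]|^2$ is not $O(|D|\mathbf{K}^2)$ in the worst case. I would resolve it the way NNDescent does in practice, through the sampling step mentioned after Algorithm~\ref{algo:knng}. Reverse lists are sampled to size $O(\mathbf{K})$, so $|N[o]|=O(\mathbf{K})$. Each iteration then performs $O(|D|\mathbf{K}^2)$ distance computations, giving $O(|D|\mathbf{K}^2 d)$ per iteration and $O(T|D|\mathbf{K}^2 d)$ overall. I would state this bounded-degree (sampling) assumption explicitly, noting it is the standard one~\cite{NN-Descent-WWW-2011}. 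The $O(\log\mathbf{K})$ list-update cost is dominated by the $O(d)$ distance cost.

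\textbf{Phase 3: reverse-neighbor lists.} Transposition performs exactly $|D|\mathbf{K}$ appends (Lines~5--8). Sorting by rank can be done without a comparison sort. Ranks lie in $\{1,\dots,\mathbf{K}\}$, and iterating $j$ in the outer loop (or bucketing by $j$) produces lists already in ascending rank order, so the phase costs $O(|D|\mathbf{K})$ rather than $O(|D|\mathbf{K}\log\mathbf{K})$. Summing the three phases yields the stated bound.
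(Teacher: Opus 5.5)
Your proposal is correct and follows the same three-phase accounting as the paper's proof: expected $O(\log |D|)$ per HNSW insertion, $O(\mathbf{K}^2)$ pairwise comparisons of cost $O(d)$ per vertex in each of the $T$ NNDescent iterations, and $O(|D|\cdot\mathbf{K})$ for transposing the ranked $\mathbf{K}$NN graph. Your version is more careful than the paper's in two places. The paper justifies the $O(\mathbf{K}^2)$ term by asserting that both forward and reverse lists have size at most $\mathbf{K}$, which is false for reverse lists unless they are sampled as you make explicit. It also never accounts for the rank sort in Lines~9--10, whereas your bucketing or loop-reordering argument keeps that step within $O(|D|\cdot\mathbf{K})$.
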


\ifthenelse{\isundefined{\themacro}}{}{
\begin{proof}
Algorithm~\ref{alg:build} consists of three phases.
First, the algorithm builds the $\HNSW$ graph over $D$.
$\HNSW$ inserts vectors incrementally. For each inserted vector, it performs graph search to locate candidate neighbors and then selects at most $M$ neighbors for graph construction. Following
the standard logarithmic search behavior of $\HNSW$, the expected cost of inserting one vector is $O(\log |D|)$. Therefore, inserting all $|D|$ vectors costs $O(|D|  \cdot \log |D|)$.
During this phase, $\HNSW$ also records the bottom-layer search results as initialization seeds for the ranked $\KNN$ graph. This recording is performed during $\HNSW$ construction and does not change the asymptotic cost.

Second, $\HRNN$ constructs the ranked $\KNN$ graph. Each
vector maintains a ranked list of $\mathbf{K}$ neighbors. The initial
lists are obtained from the $\HNSW$ search results and are refined using NNDescent. In each refinement iteration, for each vector, NNDescent compares candidate pairs generated from its forward and reverse neighbor lists. Since each list has size at most $\mathbf{K}$, the number of candidate comparisons per vector is $O(\mathbf{K}^{2})$. Each comparison requires computing the distance between two $d$-dimensional vectors, which costs $O(d)$. Hence,
one NNDescent iteration costs $O(|D| \cdot \mathbf{K}^{2} \cdot d)$.
After $T$ iterations, the total refinement cost is $O(T \cdot |D| \cdot \mathbf{K}^{2} \cdot d)$.
Third, $\HRNN$ constructs the reverse-neighbor lists by transposing the ranked $\KNN$ graph. Since each of the $|D|$ vectors stores $\mathbf{K}$ ranked neighbors, the ranked $\KNN$ graph contains exactly $|D| \cdot \mathbf{K}$ directed neighbor entries. For each entry, $\HRNN$ inserts one corresponding reverse posting. Thus, materializing the reverse-neighbor lists costs $O(|D| \cdot \mathbf{K})$.
Summing the costs of the three phases gives the result.
\end{proof}}

\vspace{-1em}

\subsection{Index Maintenance}\label{sub:maintenance}
Real-world vector datasets are rarely static. New vectors are continuously generated in applications such as retrieval-augmented generation and recommendation systems. Rebuilding the entire $\HRNN$ index after each insertion is prohibitively expensive.
Given a newly inserted point $o_{\text{new}}$, maintenance must synchronize three coupled structures. First, $o_{\text{new}}$ must be incorporated into the $\HNSW$ graph to support future proxy retrieval. Second, if $o_{\text{new}}$ becomes one of the top-$\mathbf{K}$ nearest neighbors of existing vectors, the corresponding ranked neighbor lists in $\KNNG$ must be updated. Third, all affected reverse-neighbor lists must be refreshed to remain consistent with the updated ranked KNN graph.
Thus, we first insert $o_{\text{new}}$ into $\HNSWG$ and then describe how to minimize the maintenance cost of updating $\KNNG$ and $\RR$.

\stitle{Exact Affected Area.}
The key challenge is identifying the existing points in dataset $D$ affected by the insertion of $o_{\text{new}}$. 
Let $r_{\mathbf{K}}(o)$ denote the distance from $o$ to its current $\mathbf{K}$-th nearest neighbor in $\KNNG$. 
A vector $x$ is affected if $o_{\text{new}}$ should appear in its top-$\mathbf{K}$ neighbor list after insertion. 
Formally, the exact affected set is
\begin{equation}
S^\star(o_{\text{new}})
=
\{x \in D \mid
\delta(x,o_{\text{new}}) < r_{\mathbf{K}}(x)\}.
\end{equation}

For every point $x \in S^\star(o_{\text{new}})$, inserting $o_{\text{new}}$ modifies the ranked neighbor list of $x$, which subsequently triggers updates to reverse-neighbor lists. However, computing $S^\star(o_{\text{new}})$ exactly requires evaluating $\delta(x,o_{\text{new}})$ for every point in the dataset, resulting in an $O(|D|)$ scan per insertion. Such a cost is impractical for large-scale vector collections.

\stitle{Approximate Affected Area.}
To avoid exhaustive scans, $\HRNN$ adopts the same proxy principle used during query processing. Intuitively, points close to $o_{\text{new}}$ tend to share similar reverse-neighbor relationships and therefore provide effective hints about the affected region.
During $\HNSW$ insertion, $\HRNN$ obtains a set of nearby proxy points $N_{m_u}(o_{\text{new}})$ and records the search result $W(o_{\text{new}})$ used to initialize the ranked neighbor list of $o_{\text{new}}$. Given an update-time rank threshold $\Theta_u \le \mathbf{K}$, $\HRNN$ constructs an approximate affected set by scanning the reverse-neighbor lists of these proxies:
\begin{equation}
\hat S_u(o_{\text{new}})
=
\bigcup_{b \in N_{m_u}(o_{\text{new}})}
\{x \mid (x,j)\in \mathbf{R}(b),\ j \le \Theta_u\}.
\end{equation}

The parameter $\Theta_u$ controls the trade-off between maintenance cost and update quality. Larger values increase the coverage of the affected region but require scanning more reverse-neighbor entries.

\stitle{Maintenance Procedure.}
After identifying the approximate affected set, $\HRNN$ incrementally updates all index components.
Specifically, $o_{\text{new}}$ is first inserted into the $\HNSW$ graph, and the collected search result $W(o_{\text{new}})$ is used to initialize its ranked neighbor list in $\KNNG$. 
Next, $\HRNN$ examines each point $x \in \hat S_u(o_{\text{new}})$. If $\delta(x,o_{\text{new}})$ is smaller than the distance to the current $\mathbf{K}$-th neighbor of $x$, the new vector is inserted into $\KNNG[x]$. If the list already contains $\mathbf{K}$ entries, the current farthest neighbor is evicted. Whenever a neighbor list changes, the corresponding reverse-neighbor lists are updated by removing obsolete entries, inserting new entries, and adjusting the ranks of shifted neighbors.
Since the affected area is identified approximately rather than exactly, the maintained index may differ slightly from one obtained through full batch reconstruction. However, as demonstrated in Section~\ref{sec:experiment}, this discrepancy is negligible in practice. Periodic reconstruction can be performed when necessary to restore the index to its batch-built state.
\ifthenelse{\isundefined{\themacro}}{The complete algorithm is provided in our technical report~\cite{technicalreport}.}{}

\ifthenelse{\isundefined{\themacro}}{}{
\begin{algorithm}[!htb]
\caption{HRNN Index Insertion Maintenance}
\label{alg:insert-maintenance}
\small
\KwIn{Index $\mathcal{I}=(\HNSWG,\KNNG,\mathbf{R})$, new vector $o_{\mathrm{new}}$, update parameters $m_u, \Theta_u$}
\KwOut{Updated index $\mathcal{I}$}

\tcp{Phase 1: Insert into $\HNSW$ and collect proxy vectors}
$(\HNSWG,W(o_{\mathrm{new}}))
\gets
\textsc{HNSW-Insert}(\HNSWG,o_{\mathrm{new}})$\;

$B_{m_u}(o_{\mathrm{new}})
\gets
$ top-$m_u$ closest vectors in $W(o_{\mathrm{new}})$ to $o_{\mathrm{new}}$\;

\tcp{Phase 2: Construct the approximate affected area}
$\hat S_u(o_{\mathrm{new}})\gets\emptyset$\;

\ForEach{$b\in B_{m_u}(o_{\mathrm{new}})$}{
    \ForEach{$(x,j)\in \mathbf{R}(b)$ in ascending order of $j$}{
        \lIf{$j> \Theta_u$}{break}
        $\hat S_u(o_{\mathrm{new}})
        \gets
        \hat S_u(o_{\mathrm{new}})\cup\{x\}$\;
    }
}

\tcp{Phase 3: Initialize the new vector}
$L_{\mathrm{new}}
\gets
$ top-$\mathbf{K}$ vectors in $W(o_{\mathrm{new}})$
ranked by $\delta(o_{\mathrm{new}},\cdot)$\;

$\KNNG[o_{\mathrm{new}}]
\gets
L_{\mathrm{new}}$\;

\ForEach{$v\in L_{\mathrm{new}}$}{
    $j\gets$ rank of $v$ in $L_{\mathrm{new}}$\;
    Insert $(o_{\mathrm{new}},j)$ into $\mathbf{R}(v)$,
    keeping $\mathbf{R}(v)$ sorted by rank\;
}

\tcp{Phase 4: Refresh affected neighborhoods}
\ForEach{$x\in\hat S_u(o_{\mathrm{new}})$}{

    $L_{\mathrm{old}}
    \gets
    \KNNG[x]$\;

    $r_{\mathbf{K}}(x)
    \gets
    \delta(x,\KNNG[x,\mathbf{K}])$\;

    \If{$\delta(x,o_{\mathrm{new}})<r_{\mathbf{K}}(x)$}{

        $L_{\mathrm{upd}}
        \gets
        \textsc{TopK}(L_{\mathrm{old}}\cup\{o_{\mathrm{new}}\})$
        ranked by $\delta(x,\cdot)$\;

        $\KNNG[x]
        \gets
        L_{\mathrm{upd}}$\;

        \ForEach{$v\in L_{\mathrm{old}}\setminus L_{\mathrm{upd}}$}{
            Remove $(x,\cdot)$ from $\mathbf{R}(v)$\;
        }

        \ForEach{$v\in L_{\mathrm{upd}}$}{
            $j\gets$ rank of $v$ in $L_{\mathrm{upd}}$\;
            Insert or update $(x,j)$ in $\mathbf{R}(v)$,
            keeping $\mathbf{R}(v)$ sorted by rank\;
        }
    }
}

\Return{$\mathcal{I}$}
\end{algorithm}}

\ifthenelse{\isundefined{\themacro}}{}{
\stitle{Algorithm.}
Algorithm~\ref{alg:insert-maintenance} incrementally maintains the $\HRNN$ index after the arrival of a new vector $o_{\mathrm{new}}$. The procedure updates the three coupled structures of $\HRNN$: the $\HNSW$ graph $\HNSWG$, the ranked $\KNN$ graph $\KNNG$, and the reverse-neighbor lists $\mathbf{R}$.
In Phase~1, the algorithm inserts $o_{\mathrm{new}}$ into $\HNSWG$ and reuses the search result $W(o_{\mathrm{new}})$ produced during $\HNSW$ insertion (Lines~1--2). The top-$m_u$ closest vectors in this set are selected as proxy vectors $N_{m_u}(o_{\mathrm{new}})$, which serve as proxies for identifying the affected region.
In Phase~2, the algorithm constructs the approximate affected area $\hat S_u(o_{\mathrm{new}})$ (Lines~3--8). For each proxy vector $b$, it scans the reverse-neighbor list $\mathbf{R}(b)$ in ascending rank order and collects the owner vectors whose ranks do not exceed the threshold $\Theta_u$. This step follows the same proxy principle used during query processing and avoids the expensive computation of the exact affected area.
In Phase~3, the algorithm initializes the ranked neighbor list of $o_{\mathrm{new}}$ using the top-$\mathbf{K}$ vectors from $W(o_{\mathrm{new}})$ (Lines~9--14). The resulting neighborhood is inserted into $\KNNG$, and the corresponding reverse-neighbor entries are added to $\mathbf{R}$.
Finally, in Phase~4, the algorithm refreshes the neighborhoods of all vectors in $\hat S_u(o_{\mathrm{new}})$ (Lines~15--30). For each vector $x$, it first retrieves the current $\mathbf{K}$NN-radius $r_{\mathbf{K}}(x)$. If $\delta(x,o_{\mathrm{new}})<r_{\mathbf{K}}(x)$, the new vector belongs to the top-$\mathbf{K}$ neighborhood of $x$. The algorithm therefore inserts $o_{\mathrm{new}}$ into $\KNNG[x]$, removes the farthest neighbor if necessary, and constructs the updated ranked neighbor list. It then synchronizes the reverse-neighbor lists by removing obsolete postings and inserting or updating the postings associated with the modified neighborhood. Consequently, the ranked $\KNN$ graph and reverse-neighbor lists remain consistent after each insertion.
}

\begin{theorem}
For an insertion of $o_{\text{new}}$, the maintenance cost of
$\HRNN$ is
$O\!\left(
 \log |D|
+
s_u
+
d \cdot |\hat S_u(o_{\text{new}})|
+
(a_u+1)\cdot \mathbf{K}
\right)$, where 
$s_u$ is the number of scanned reverse-neighbor entries,
$\hat S_u(o_{\text{new}})$ is the approximate affected set, and
$a_u$ is the number of existing ranked neighbor lists updated.
\end{theorem}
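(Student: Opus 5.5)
The plan is to follow the four phases of Algorithm~\ref{alg:insert-maintenance} and bound each separately, as in the proofs of the earlier construction and query theorems; the four terms of the stated bound should then correspond to the phases. For Phase~1, I will use the standard logarithmic behavior of $\HNSW$ insertion, already assumed in the construction cost theorem, so inserting $o_{\text{new}}$ into $\HNSWG$ takes expected $O(\log|D|)$ time. I will treat $M$, $ef_c$ and $m_L$ as constants. Recording $W(o_{\text{new}})$ and selecting its top-$m_u$ vectors is a by-product of this search, since $|W(o_{\text{new}})|$ is bounded by the beam width, so it adds nothing asymptotically.

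For Phase~2, each reverse-neighbor list is kept sorted by rank. The scan of $\mathbf{R}(b)$ therefore stops at the first entry with $j>\Theta_u$, and the work is proportional to the entries examined. Summed over the proxies, this is $O(s_u)$ by definition of $s_u$, with hash-set insertion for deduplication taking expected $O(1)$ per entry. For Phase~3, the top-$\mathbf{K}$ selection from $W(o_{\text{new}})$ reuses distances already computed during the search. It inserts $\mathbf{K}$ reverse postings, contributing $O(\mathbf{K})$; this is the ``$+1$'' in the factor $(a_u+1)\mathbf{K}$.

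For Phase~4, every $x\in\hat S_u(o_{\text{new}})$ costs one distance computation, $O(d)$, plus an $O(1)$ lookup of $r_{\mathbf{K}}(x)$ from the ranked list, for a total of $O(d\cdot|\hat S_u(o_{\text{new}})|)$. Only the $a_u$ vectors that pass the test get their lists rebuilt. Merging $o_{\text{new}}$ into a sorted list of length $\mathbf{K}$ is $O(\mathbf{K})$, and at most one neighbor is evicted. The ranks of at most $\mathbf{K}$ neighbors shift, so at most $O(\mathbf{K})$ reverse postings are inserted, removed, or updated per list, giving $O(a_u\mathbf{K})$ in total. Adding the four phases gives the stated bound.

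The main obstacle is the cost of a single reverse-posting update ``keeping $\mathbf{R}(v)$ sorted by rank''. A naive sorted array makes each update cost $O(|\mathbf{R}(v)|)$, not $O(1)$, and reverse lists can be long. My fix is to fix the representation so each posting operation is amortized $O(1)$. Since ranks lie in $\{1,\dots,\mathbf{K}\}$, I will store $\mathbf{R}(v)$ as $\mathbf{K}$ rank buckets together with a hash index from owner $x$ to its bucket position. Changing or removing $(x,j)$ is then expected $O(1)$, and scanning in ascending rank order is unaffected. I will state this implementation assumption explicitly; without it the bound only holds up to a factor of the maximum reverse-list length.
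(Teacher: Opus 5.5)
Your proposal is correct and matches the paper's proof phase by phase: $O(\log|D|)$ for the HNSW insertion, $O(s_u)$ for the rank-truncated scan, $O(d\cdot|\hat S_u(o_{\text{new}})|)$ for the distance tests, and $O(\mathbf{K})$ touched postings for each of the $a_u$ refreshed lists plus the new vertex's list. The paper simply assumes each touched posting costs $O(1)$, and your rank-bucket/hash representation makes that step rigorous; note, however, that dense rank buckets make each of the $m_u$ scans also pass over up to $\Theta_u$ possibly empty buckets, an extra $O(m_u\Theta_u)$ that is absorbed into the $(a_u+1)\cdot\mathbf{K}$ term only because $m_u$ is bounded by the beam width you treat as constant.
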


\vspace{-0.2cm}

\ifthenelse{\isundefined{\themacro}}{}{
\begin{proof}
The maintenance procedure has four parts. First, $\HRNN$
inserts $o_{\text{new}}$ into the $\HNSW$ graph, and one insertion costs $O(\log |D|)$.
Second, $\HRNN$ discovers the approximate affected set by
scanning the rank-truncated reverse-neighbor lists of the update
proxies. Let $s_u$ denote the number of scanned reverse-neighbor
entries. This step costs $O(s_u)$.
Third, for each vector $x\in \hat S_u(o_{\text{new}})$, $\HRNN$
checks whether $o_{\text{new}}$ enters the top-$\mathbf{K}$ neighbor list
of $x$. Each check requires one exact distance computation
$\delta(x,o_{\text{new}})$ over $d$ dimensions and one comparison with
the current $\mathbf{K}$-th neighbor distance. Thus, this step costs $O(d \cdot |\hat S_u(o_{\text{new}})|)$.
Finally, suppose $a_u$ existing ranked neighbor lists are updated. Refreshing one changed list may shift at most $\mathbf{K}$ ranked positions and therefore touches $O(\mathbf{K})$ reverse-neighbor entries.
In addition, the ranked neighbor list of $o_{\text{new}}$ is initialized and materialized, which costs $O(\mathbf{K})$. Hence, the synchronization
cost is $O((a_u+1)\cdot \mathbf{K})$.
Summing the four terms gives the total cost.
\end{proof}}

\stitle{Remark.}
Building an $\HRNN$ index solely through incremental insertions is feasible, but incurs the cumulative cost of maintaining the index after each update.
Unlike batch construction in Algorithm~\ref{alg:build}, which amortizes neighbor refinement through a one-time NNDescent procedure, insertion-based construction performs affected-area discovery and reverse-neighbor-list synchronization for each arriving point. 
Consequently, its total cost depends on the distribution of update proxies, the sizes of the affected areas, and the number of modified reverse-neighbor lists. As shown in our experiments, insertion-only construction is slower than batch construction on static datasets. However, it avoids expensive index rebuilding and supports continuous updates while preserving the same index structure and query processing algorithm.

\section{Experiment} \label{sec:experiment}

\subsection{Experimental Setup}

\stitle{Datasets.}
We evaluate on four real-world vector datasets: SIFT~\cite{SIFT-IVFADC-2011-ICASSP}, Msong~\cite{MSONG-2012-WWW}, GIST~\cite{GIST-PQ-2011-TPAMI}, and MSMARCO~\cite{MSMARCO-Dataset-2016-CocoNIPS}. SIFT, Msong, and GIST are standard benchmarks widely used in prior approximate R$k$NN studies~\cite{HAMG-ICDE-2024}, while MSMARCO represents modern text-embedding retrieval workloads. 
Table~\ref{tbl:dataset} summarizes the dataset statistics.
Due to the $O(|D|^2)$ cost of generating exact R$k$NN ground truth, following prior work~\cite{HAMG-ICDE-2024}, our main experiments use datasets with 1 million vectors. 
To evaluate scalability, we further evaluate $\HRNN$ on datasets containing up to 10 million vectors in Exp-8.

\begin{table}[t]
\centering
\caption{Dataset statistics.}\vgap\vspace{-0.5em}
\label{tbl:dataset}
\small
\begin{tabular}{l|c|c|c|c}
\toprule
\textbf{Dataset} & \textbf{Dimension} & \textbf{\# Base} & \textbf{\# Query} & \textbf{Domain} \\
\midrule
SIFT     & 128  & 1.0M & 10,000 & Images \\
Msong    & 420  & 1.0M & 1,000  & Audio \\
GIST     & 960  & 1.0M & 1,000  & Images \\
MSMARCO  & 1024 & 1.0M & 1,000  & Text Embeddings \\
\bottomrule
\end{tabular}\vspace{1em}
\end{table}

\stitle{Compared Methods.}
We compare $\HRNN$ against representative graph-based approximate R$k$NN methods, including $\SFT$, $\RDT$, and the state-of-the-art $\HAMG$~\cite{HAMG-ICDE-2024}. Following $\HAMG$, we implement $\SFT$ and $\RDT$ on top of $\HNSW$ to provide strong high-dimensional baselines. We do not compare against spatial-partitioning methods~\cite{TPL-2004-VLDB,TPL2-2007-Springer,FINCH-2008-VLDB} or tree-based R$k$NN methods~\cite{ERkNN-2005-CIKM,MRKNNCop-2006-SIGMOD}, as prior studies have shown them to be prohibitively expensive on million-scale high-dimensional datasets.

\begin{itemize}[leftmargin=*]
\item \textbf{$\HAMG$.} The state-of-the-art AR$k$NN method~\cite{HAMG-ICDE-2024}. Since the original Java implementation is not publicly available and the paper does not fully specify all dataset-specific parameter settings, we re-implement $\HAMG$ in C++ based on the published algorithm description. We tune both construction and query parameters over the ranges reported in~\cite{HAMG-ICDE-2024}, as well as larger budgets, and report the best recall--latency trade-off achieved on our platform.

\item \textbf{$\HNSW$-$\SFT$.} The $\SFT$-style baseline used in $\HAMG$~\cite{HAMG-ICDE-2024,SFT-2003}. It first retrieves an expanded set of $k'$ nearest neighbors of the query using $\HNSW$ and then verifies each candidate for R$k$NN membership.

\item \textbf{$\HNSW$-$\RDT$.} The $\RDT$ baseline~\cite{RDT-2017-VLDB,HAMG-ICDE-2024} implemented on top of $\HNSW$, following the design adopted in $\HAMG$. It incrementally explores unchecked nearest neighbors around the query and applies the pruning and termination rules of $\RDT$.
\end{itemize}

\stitle{Metrics.}
We evaluate both efficiency and accuracy.
Query performance is measured by average queries per second (QPS) over all queries, while accuracy is measured by Recall@$k$. For index construction, we report construction time and index size. A construction process is terminated if its runtime exceeds $10^4$ seconds.

\stitle{Implementation.}
We implement all methods in C++ and compile them using GCC 13.1.0. Experiments are conducted on a server equipped with an Intel Xeon Platinum 8352V CPU @ 2.10GHz and 512\,GB RAM running Ubuntu 20.04 LTS. Following common practice in vector search studies~\cite{NSG-VLDB-2019-Fu,DiskANN-NIPS-2019,HNSW-2020-TPAMI,Graph-index-survey-VLDB-2021-Wang,HAMG-ICDE-2024}, all query experiments are executed using a single thread, while index construction uses 64 threads.
For $\HRNN$, the $\HNSW$ graph is constructed with $M=16$ and $ef_c=400$. For the baselines, we tune parameters according to the ranges reported in their original papers and use $C=8000$ and $d_m=80$ for $\HAMG$, and $ef_c=400$ for $\HNSW$-$\RDT$ and $\HNSW$-$\SFT$.

\stitle{Parameter Settings.}
Unless otherwise specified, all experiments evaluate AR$k$NN search with $k=10$. For $\HRNN$, we construct a single index per dataset with $\mathbf{K}=500$.
$\HRNN$ has two query-time parameters: the number of proxies $m$ and the reverse-neighbor threshold $\Theta$. 
To generate the recall--throughput trade-off curves in Exp-1, we evaluate $m \in \{1,3,5,10,20,30,50,100,150,200\}$ and $\Theta \in \{10,20,30,50,100,200,300,500\}$. 
Each reported $\HRNN$ point corresponds to a specific $(m,\Theta)$ configuration. For experiments requiring a single configuration at a target recall threshold $\tau$, we select the configuration that achieves the highest throughput among all evaluated settings satisfying Recall@$k \geq \tau$.

\vspace{-0.5em}
\subsection{Experimental Results}
\vspace{-0.5em}

\input{figures/main_result_grid_search}

\stitle{Exp-1: Query Performance Comparison.}
We first compare the end-to-end recall--throughput trade-off of all methods. Fig.~\ref{fig:main-experiment} reports the results of $\HRNN$, $\HAMG$, $\HNSW$-$\SFT$, and $\HNSW$-$\RDT$ on the four datasets.
On MSMARCO, at Recall@10 above 0.99, $\HRNN$ achieves 185 QPS, outperforming $\HAMG$, $\HNSW$-$\SFT$, and $\HNSW$-$\RDT$ by $83.1\times$, $10.3\times$, and $23.3\times$, respectively. 
On GIST, $\HAMG$ achieves only 0.714 recall. 
Although $\HNSW$-$\SFT$ and $\HNSW$-$\RDT$ can further improve recall by expanding the query-centered search region, their throughput drops to roughly 1 QPS or below. 
Specifically, $\HNSW$-$\SFT$ reaches 0.958 recall at only 0.372 QPS, while $\HNSW$-$\RDT$ reaches 0.880 recall at 1.076 QPS. 
We therefore terminate further expansion once the throughput becomes prohibitively low. In contrast, $\HRNN$ achieves 0.999 recall at 39.3 QPS.

These results expose the limitations of existing methods for high-dimensional R$k$NN search, especially on datasets with a pronounced mismatch between nearby neighbors and true R$k$NN results. By using nearby neighbors as proxies instead of direct candidates and verifying only reverse-lookup candidates with pre-materialized $k$NN-radius information, $\HRNN$ substantially reduces both candidate-generation and verification overhead, yielding consistently better recall--throughput trade-offs.

\stitle{Exp-2: Query Time Breakdown.}
We analyze how query time is distributed across different stages of $\HRNN$. 
We decompose each query into three phases: (1) proxy retrieval using the navigation graph $\HNSWG$, (2) reverse-neighbor list scanning for candidate generation, and (3) candidate verification using pre-materialized $k$NN-radius retrieved from the ranked $k$NN graph $\KNNG$. 
For each dataset, we select the lowest-latency configuration from the same $(m,\Theta)$ grid used in Exp-1 under Recall@10 thresholds of 0.95 and 0.99.

Fig.~\ref{fig:phase-breakdown} shows that proxy retrieval incurs a small and nearly constant cost across datasets and recall targets, ranging from 0.43ms to 1.7ms. 
Reverse-neighbor list scanning also remains lightweight because $\HRNN$ accesses only rank-truncated posting entries. In contrast, candidate verification becomes the dominant cost at high recall levels, as more candidates must be validated. For example, verification accounts for 78.5\%--86.3\% of the total query time on GIST and increases from 35.7\% to 74.3\% on MSMARCO as the target recall rises from 0.95 to 0.99.
Nevertheless, compared with existing methods (see also Fig.~\ref{fig:decomp-cost}), $\HRNN$ reduces verification overhead by replacing expensive online $k$NN-radius computation with direct lookups from the materialized ranked $k$NN graph, thereby mitigating one of the major bottlenecks of prior R$k$NN approaches.

\begin{figure}[t]
    \centering
    \captionsetup[sub]{skip=-2pt} 
    \begin{small}
    \begin{tikzpicture}
    \begin{customlegend}[legend columns=3,
    legend entries={Phase 1 Proxy Retrieval, Phase 2 $\mathbf{R}$ List Scan, Phase 3 Verify},
    legend style={at={(0.5,1.15)},anchor=north,draw=none,font=\scriptsize,column sep=0.1cm}]
    \addlegendimage{area legend,fill=navy!70,draw=navy}
    \addlegendimage{area legend,fill=amaranth!50,draw=amaranth}
    \addlegendimage{area legend,fill=orange!70,draw=orange}
    \end{customlegend}
    \end{tikzpicture}
    \\[-0.8em]
\subfloat[SIFT]{
    \begin{tikzpicture}
    \begin{axis}[
        ybar stacked,
        height=\columnwidth/3,
        width=0.235\textwidth,
        bar width=9pt,
        symbolic x coords={0.95,0.99},
        xtick=data,
        xlabel=Target Recall@10, xlabel style={font=\scriptsize, yshift=4pt},
        ylabel=Latency (ms), ylabel style={yshift=-4pt},
        ymin=0, ymax=1.27,
        label style={font=\scriptsize},
        tick label style={font=\scriptsize},
        ymajorgrids=true, grid style=dashed,
        enlarge x limits=0.5,
        legend style={draw=none},
    ]
    \addplot+[fill=navy!70,draw=navy] coordinates {(0.95,0.425) (0.99,0.429)};
    \addplot+[fill=amaranth!50,draw=amaranth] coordinates {(0.95,0.051) (0.99,0.115)};
    \addplot+[fill=orange!70,draw=orange] coordinates {(0.95,0.263) (0.99,0.559)};
    \end{axis}
    \end{tikzpicture}}
\subfloat[Msong]{
    \begin{tikzpicture}
    \begin{axis}[
        ybar stacked,
        height=\columnwidth/3,
        width=0.235\textwidth,
        bar width=9pt,
        symbolic x coords={0.95,0.99},
        xtick=data,
        xlabel=Target Recall@10,xlabel style={font=\scriptsize, yshift=4pt},
        ylabel=Latency (ms),ylabel style={yshift=-4pt},
        ymin=0, ymax=1.53,
        label style={font=\scriptsize},
        tick label style={font=\scriptsize},
        ymajorgrids=true, grid style=dashed,
        enlarge x limits=0.5,
        legend style={draw=none},
    ]
    \addplot+[fill=navy!70,draw=navy] coordinates {(0.99,0.896)};
    \addplot+[fill=amaranth!50,draw=amaranth] coordinates {(0.99,0.037)};
    \addplot+[fill=orange!70,draw=orange] coordinates {(0.99,0.398)};
    \end{axis}
    \end{tikzpicture}}
    
\subfloat[GIST]{
    \begin{tikzpicture}
    \begin{axis}[
        ybar stacked,
        height=\columnwidth/3,
        width=0.235\textwidth,
        bar width=9pt,
        symbolic x coords={0.95,0.99},
        xtick=data,
        xlabel=Target Recall@10,xlabel style={font=\scriptsize, yshift=4pt},
        ylabel=Latency (ms), ylabel style={yshift=-4pt},
        ymin=0, ymax=20.99,
        label style={font=\scriptsize},
        tick label style={font=\scriptsize},
        ymajorgrids=true, grid style=dashed,
        enlarge x limits=0.5,
        legend style={draw=none},
    ]
    \addplot+[fill=navy!70,draw=navy] coordinates {(0.95,1.656) (0.99,1.697)};
    \addplot+[fill=amaranth!50,draw=amaranth] coordinates {(0.95,0.297) (0.99,0.804)};
    \addplot+[fill=orange!70,draw=orange] coordinates {(0.95,7.122) (0.99,15.755)};
    \end{axis}
    \end{tikzpicture}}
\subfloat[MSMARCO-1M]{
    \begin{tikzpicture}
    \begin{axis}[
        ybar stacked,
        height=\columnwidth/3,
        width=0.235\textwidth,
        bar width=9pt,
        symbolic x coords={0.95,0.99},
        xtick=data,
        xlabel=Target Recall@10,xlabel style={font=\scriptsize, yshift=4pt},
        ylabel=Latency (ms),ylabel style={yshift=-4pt},
        ymin=0, ymax=6.58,
        label style={font=\scriptsize},
        tick label style={font=\scriptsize},
        ymajorgrids=true, grid style=dashed,
        enlarge x limits=0.5,
        legend style={draw=none},
    ]
    \addplot+[fill=navy!70,draw=navy] coordinates {(0.95,1.023) (0.99,1.095)};
    \addplot+[fill=amaranth!50,draw=amaranth] coordinates {(0.95,0.055) (0.99,0.374)};
    \addplot+[fill=orange!70,draw=orange] coordinates {(0.95,0.598) (0.99,4.253)};
    \end{axis}
    \end{tikzpicture}}
    \end{small}
    \vspace{-0.4cm}
    \caption{Query time breakdown at recall@10 targets (0.95 and 0.99).}
    \label{fig:phase-breakdown}\vspace{1em}
\end{figure}

\stitle{Exp-3: Index Construction Time and Space Consumption.}
We next compare the index construction time and index size of different methods. The results are reported in Table~\ref{tab:index-time} and Table~\ref{tab:index-space}, respectively.
Table~\ref{tab:index-time} compares the construction cost of graph $\HNSW$, index $\HRNN$, and index $\HAMG$. 
Compared with the graph $\HNSW$, $\HRNN$ incurs additional offline overhead to construct the ranked $\KNN$ graph and materialize the reverse-neighbor lists. Nevertheless, it remains substantially cheaper to build than $\HAMG$. Across the four datasets, $\HAMG$ requires $5.2$--$14.4\times$ longer construction time than $\HRNN$, with the largest gap observed on MSMARCO.

Table~\ref{tab:index-space} reports the index size of different methods. The \emph{Base} column shows the raw vector storage size as a reference, while the method columns report index overhead only and exclude the raw vectors. $\HRNN$ stores both the navigation graph and the reverse-neighbor lists derived from the ranked $\KNN$ graph. As a result, it requires more memory than $\HNSW$. This overhead is most pronounced on low-dimensional datasets such as SIFT, where the raw vector storage is relatively small. On higher-dimensional datasets, the overhead becomes more moderate. For example, the total space consumption of $\HRNN$ is only $1.54\times$ and $1.52\times$ the raw vector size on GIST and MSMARCO, respectively.

 \begin{table}[t]
  \centering
  \small
    \caption{Index construction time (seconds).}\label{tab:index-time}

    \vspace{-0.4cm}
  
  \begin{tabular}{lrrr}
  \toprule
  Dataset & $\HNSW$ & $\HRNN$ & $\HAMG$ \\
  \midrule
  SIFT    & 49  & 185 & 954  \\
  Msong   & 91  & 280 & 1537 \\
  GIST    & 229 & 452 & 4286 \\
  MSMARCO & 199 & 553 & 7947 \\
  \bottomrule
  \end{tabular}
  \end{table}

  \begin{table}[t]
  \centering
  \small
  \setlength{\tabcolsep}{4pt}
  \caption{Index space consumption (MB).}
  \label{tab:index-space}

    \vspace{-0.4cm}
  
  \begin{tabular}{lrrrrr}
  \toprule
  Dataset & Base & $\HNSW$ & $\HAMG$ & $\HRNN$ & $\HRNN$ Total/Base \\
  \midrule
  SIFT    & 512.0  & 100.7 & 206.6 & 2112.6 & 5.13$\times$ \\
  Msong   & 1670.2 & 83.3  & 165.2 & 2083.4 & 2.25$\times$ \\
  GIST    & 3840.0 & 71.8  & 199.9 & 2083.8 & 1.54$\times$ \\
  MSMARCO & 4096.0 & 103.4 & 275.3 & 2114.9 & 1.52$\times$ \\
  \bottomrule
  \end{tabular}\vspace{1em}
  \end{table}

\stitle{Exp-4: Parameter Sensitivity.}
We study the impact of $\HRNN$'s two query-time parameters: the number of proxies $m$ and the reverse-neighbor threshold $\Theta$. Both parameters increase candidate coverage but affect different stages of the search process. Specifically, a larger $m$ expands the set of query-side proxies, while a larger $\Theta$ relaxes the rank constraint when scanning reverse-neighbor lists. Since the ranked $\KNN$ graph is constructed offline with a fixed size of $\mathbf{K}=500$, all settings satisfying $\Theta\leq\mathbf{K}$ are supported by the same index.
\ifthenelse{\isundefined{\themacro}}{
The full results are reported in the figure titled ``Recall@10 (top) and QPS (bottom, log scale) over the evaluated $(m,\Theta)$ parameter grid'' in our technical report~\cite{technicalreport}.
The results show that increasing either $m$ or $\Theta$ generally improves Recall@10 at the cost of lower throughput.}{}

Table~\ref{tab:param-selection} reports the highest-throughput configurations that satisfy Recall@10 thresholds of 0.95 and 0.99. The optimal parameter settings vary across datasets. For example, Msong saturates with $(m,\Theta)=(3,10)$, whereas GIST requires larger candidate coverage, reaching Recall@10 $\geq 0.99$ only with configurations such as $(300,50)$. SIFT and MSMARCO exhibit intermediate behavior.
These results reflect the degree of mismatch between nearby neighbors and R$k$NN results, with larger mismatches requiring larger values of $(m,\Theta)$. 
Importantly, $\HRNN$ can adapt to such variations by adjusting $(m,\Theta)$ at query time, without rebuilding the index.

\ifthenelse{\isundefined{\themacro}}{}{
Fig.~\ref{fig:appendix-exp-2} reports the complete $(m,\Theta)$ parameter grid evaluated in our experiments. Increasing either $m$ or $\Theta$ generally improves Recall@10, but reduces throughput. This trend is consistent across all datasets, reflecting the trade-off between candidate coverage and query efficiency.

\begin{figure*}
    \centering
    \includegraphics[width=\linewidth]{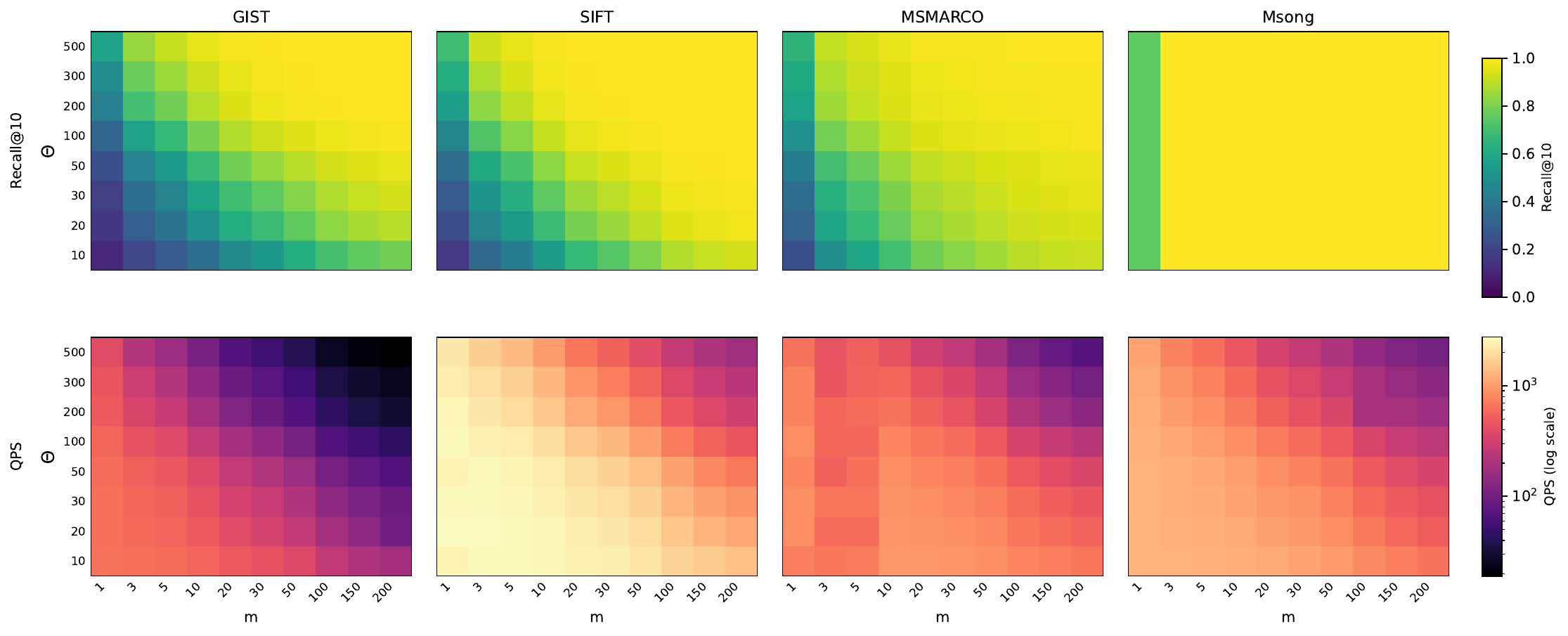}\vspace{-0.4cm}
    \caption{Recall@10 (top) and QPS (bottom, log scale) over the evaluated $(m,\Theta)$ parameter grid.}
    \label{fig:appendix-exp-2}
\end{figure*}}

\begin{table}[t]
\centering
\small
\caption{Best $\HRNN$ parameter configurations for target recall@10.}
\label{tab:param-selection}
\vspace{-0.4cm}

\begin{tabular}{lcccc}
\toprule
Dataset & \multicolumn{2}{c}{Recall@10 $\geq 0.95$} & \multicolumn{2}{c}{Recall@10 $\geq 0.99$} \\
 & $(m,\Theta)$ & Recall / QPS & $(m,\Theta)$ & Recall / QPS \\
\midrule
GIST     & $(10,500)$ & 0.9622 / 105.1  & $(50,300)$ & 0.9935 / 53.4 \\
SIFT     & $(20,100)$ & 0.9646 / 1561.8 & $(50,100)$ & 0.9916 / 1015.5 \\
MSMARCO  & $(30,100)$ & 0.9599 / 607.5  & $(50,500)$ & 0.9919 / 184.5 \\
Msong    & $(3,10)$   & 0.9990 / 1253.1 & $(3,10)$   & 0.9990 / 1253.1 \\
\bottomrule
\end{tabular}
\end{table}

\stitle{Exp-5: Ablation Study.}
We conduct ablation studies to evaluate the impact of the key components of $\HRNN$. Specifically, we examine three variants: (1) replacing $\HNSWG$ for proxy retrieval, (2) constructing the ranked $\KNNG$ without using the neighbor seeds collected from $\HNSWG$, and (3) performing candidate verification without pre-materialized $k$NN-radius information.

\sstitle{Navigation graph replacement.}
We first replace the $\HNSW$ graph used for proxy retrieval while keeping all other components unchanged.
Fig.~\ref{fig:navigation-graph-ablation} compares $\HNSW$ and $\kw{NSG}$. On GIST, the two graphs achieve similar performance in the high-recall region. At Recall@10 around 0.99, $\HNSW$ achieves 0.9906 recall at 63.9 QPS, while $\kw{NSG}$ achieves 0.9907 recall at 67.9 QPS. On SIFT, both graphs reach comparable recall, but $\HNSW$ delivers higher throughput. At Recall@10 around 0.99, $\HNSW$ achieves 0.9910 recall at 1441.8 QPS, compared with 0.9917 recall at 1082.5 QPS for $\kw{NSG}$. These results indicate that $\HRNN$ is not tied to a specific navigation graph, although the choice of graph affects query latency.

\sstitle{KNN graph construction with $\HNSW$-Seeding.}
We next evaluate how the ranked $\KNN$ graph is initialized prior to NNDescent refinement. Fig.~\ref{fig:knng-construction-time} compares standard NNDescent with random initialization against $\HRNN$'s $\HNSW$-seeded construction. On GIST, random initialization reaches only 0.9066 KNNG recall after 435\,s of refinement. In contrast, reusing $\HNSW$ insertion candidates as seeds yields an initial recall of 0.4880 and converges to perfect recall after 310.85\,s. A similar trend is observed on SIFT, where the final recall/construction time improves from 0.9874/387.47\,s to 1.0/272.51\,s. These results demonstrate that the $\HNSW$ insertion process provides high-quality initialization for NNDescent, reducing refinement cost while improving the quality of the resulting ranked $\KNN$ graph.

\sstitle{$k$NN-radius materialization.}
Finally, we evaluate whether $\HRNN$ can remove or approximate its materialized radius information. Table~\ref{tab:base-materialization-ablation} reports results on GIST with $m=20$, $\Theta=500$, and $k=10$.
The \emph{Gold Radius} variant replaces $\HRNN$'s $\KNNG$-derived radius estimates $\hat{r}_k(x)$ with exact radius $r_k(x)$ computed by brute force. Recall changes only marginally, from 0.9879 to 0.9880, indicating that the ranked $\KNN$ graph provides sufficiently accurate radius estimates for verification.
Removing the reverse-neighbor lists (i.e., the variant \emph{No Reverse-neighbor Lists}) forces $\HRNN$ to verify all 1M data points, increasing latency from 14.74\,ms to 372.04\,ms. This confirms that reverse-neighbor lists are essential for selective candidate generation and efficient verification.


\begin{figure}[t]
    \centering
    \begin{small}

    \begin{tikzpicture}
    \begin{customlegend}[legend columns=2,
    legend entries={HNSW (default), NSG},
    legend style={at={(0.5,1.20)},anchor=north,draw=none,font=\scriptsize,column sep=0.3cm}]
    \addlegendimage{line width=0.2mm, color=navy, mark=square, mark size=0.8mm}
    \addlegendimage{line width=0.2mm, color=amaranth, mark=triangle, mark size=0.8mm}
    \end{customlegend}
    \end{tikzpicture}
    \\[-\lineskip]

    \subfloat[GIST ($d\!=\!960$)]{\vspace{-2mm}
    \begin{tikzpicture}[scale=1]
    \begin{axis}[
        height=\columnwidth/3,
        width=0.235\textwidth,
        xlabel=Recall@10, xlabel style={yshift=+3pt},
        ylabel=QPS (1/s), ylabel style={yshift=-4pt},
        label style={font=\scriptsize},
        tick label style={font=\scriptsize},
        xmin=0.95, xmax=1.00,
        xtick={0.95,0.96,0.97,0.98,0.99,1.00},
        xticklabel style={/pgf/number format/fixed,/pgf/number format/precision=2,font=\scriptsize},
        ymode=log,
        ymajorgrids=true, xmajorgrids=true, grid style=dashed,
    ]
    \addplot[line width=0.2mm, color=navy, mark=square, mark size=0.6mm]
    plot coordinates {(0.9564, 112.73) (0.9583, 112.15) (0.9622, 104.66) (0.9658, 94.90) (0.9668, 78.09) (0.9859, 70.29) (0.9879, 67.75) (0.9906, 63.88) (0.9928, 56.52) (0.9951, 53.53) (0.9954, 51.39) (0.9965, 46.70) (0.9967, 40.40) (0.9992, 39.87) (0.9994, 39.15)};
    \addplot[line width=0.2mm, color=amaranth, mark=triangle, mark size=0.6mm]
    plot coordinates {(0.9678, 105.63) (0.9718, 103.45) (0.9725, 79.39) (0.9907, 67.92) (0.9909, 65.88) (0.9923, 62.71) (0.9938, 56.33) (0.9957, 51.70) (0.9959, 50.79) (0.9972, 45.71) (0.9982, 38.88) (0.9983, 37.83) (0.9994, 36.71) (0.9995, 34.52)};
    \end{axis}
    \end{tikzpicture}}
    \subfloat[SIFT ($d\!=\!128$)]{\vspace{-2mm}
    \begin{tikzpicture}[scale=1]
    \begin{axis}[
        height=\columnwidth/3,
        width=0.235\textwidth,
        xlabel=Recall@10, xlabel style={yshift=+3pt},
        ylabel=QPS (1/s), ylabel style={yshift=-4pt},
        label style={font=\scriptsize},
        tick label style={font=\scriptsize},
        xmin=0.96, xmax=1.00,
        xtick={0.96,0.97,0.98,0.99,1.00},
        xticklabel style={/pgf/number format/fixed,/pgf/number format/precision=2,font=\scriptsize},
        ymode=log,
        ymajorgrids=true, xmajorgrids=true, grid style=dashed,
    ]
    \addplot[line width=0.2mm, color=navy, mark=square, mark size=0.6mm]
    plot coordinates {(0.9615, 2547.1) (0.9736, 2124.0) (0.9811, 1753.8) (0.9855, 1717.6) (0.9859, 1527.2) (0.9910, 1441.8) (0.9924, 1332.1) (0.9948, 1130.3) (0.9963, 979.3) (0.9978, 927.5) (0.9990, 716.3) (0.9993, 596.4) (0.9996, 515.0)};
    \addplot[line width=0.2mm, color=amaranth, mark=triangle, mark size=0.6mm]
    plot coordinates {(0.9649, 1773.4) (0.9785, 1469.4) (0.9862, 1249.8) (0.9879, 1225.7) (0.9899, 1106.4) (0.9917, 1082.5) (0.9939, 967.0) (0.9965, 823.1) (0.9976, 748.6) (0.9983, 721.9) (0.9993, 572.7) (0.9996, 483.7) (0.9998, 419.0)};
    \end{axis}
    \end{tikzpicture}}
    \vspace{-0.4cm}
    \caption{Effect of navigation graph choice ($\mathbf{HNSW}$ vs.\ $\mathbf{NSG}$).}
    \label{fig:navigation-graph-ablation}\vspace{-0.5em}
    \end{small}
    
\end{figure}


\begin{figure}[t]
    \centering
    \begin{small}

    \begin{tikzpicture}
    \begin{customlegend}[legend columns=2,
    legend entries={Random init + NNDescent, HNSW seed + NNDescent},
    legend style={at={(0.5,1.20)},anchor=north,draw=none,font=\scriptsize,column sep=0.3cm}]
    \addlegendimage{line width=0.2mm, color=orange, mark=pentagon, mark size=0.8mm}
    \addlegendimage{line width=0.2mm, color=amaranth, mark=triangle, mark size=0.8mm}
    \end{customlegend}
    \end{tikzpicture}
    \\[-0.8em]

    \subfloat[GIST ($d\!=\!960$)]{\vspace{-2mm}
    \begin{tikzpicture}[scale=1]
    \begin{axis}[
        height=\columnwidth/3,
        width=0.235\textwidth,
        xmin=0, xmax=450,
        ymin=0, ymax=1.05,
        xlabel=NNDescent time (s), xlabel style={yshift=+3pt},
        ylabel=KNNG recall@10, ylabel style={yshift=-4pt},
        label style={font=\scriptsize},
        tick label style={font=\scriptsize},
        ymajorgrids=true, xmajorgrids=true, grid style=dashed,
    ]
    \addplot[line width=0.2mm, color=orange, mark=pentagon, mark size=0.6mm]
    plot coordinates {(0.00, 0.0000) (28.69, 0.0016) (87.02, 0.0193) (184.57, 0.4493) (273.45, 0.7895) (353.81, 0.8713) (435.24, 0.9066)};
    \addplot[line width=0.2mm, color=amaranth, mark=triangle, mark size=0.6mm]
    plot coordinates {(0.00, 0.4880) (16.92, 0.6242) (49.15, 0.6958) (115.04, 0.8643) (196.46, 0.9706) (260.00, 0.9951) (310.85, 1.0000)};
    \end{axis}
    \end{tikzpicture}}
    \subfloat[SIFT ($d\!=\!128$)]{\vspace{-2mm}
    \begin{tikzpicture}[scale=1]
    \begin{axis}[
        height=\columnwidth/3,
        width=0.235\textwidth,
        xmin=0, xmax=400,
        ymin=0, ymax=1.05,
        xlabel=NNDescent time (s), xlabel style={yshift=+3pt},
        ylabel=KNNG recall@10, ylabel style={yshift=-4pt},
        label style={font=\scriptsize},
        tick label style={font=\scriptsize},
        ymajorgrids=true, xmajorgrids=true, grid style=dashed,
    ]
    \addplot[line width=0.2mm, color=orange, mark=pentagon, mark size=0.6mm]
    plot coordinates {(0.00, 0.0000) (29.43, 0.0015) (88.86, 0.0194) (180.48, 0.6128) (256.51, 0.9769) (320.10, 0.9863) (387.47, 0.9874)};
    \addplot[line width=0.2mm, color=amaranth, mark=triangle, mark size=0.6mm]
    plot coordinates {(0.00, 0.5072) (16.15, 0.7713) (52.61, 0.8684) (116.48, 0.9830) (180.33, 0.9999) (225.49, 1.0000) (272.51, 1.0000)};
    \end{axis}
    \end{tikzpicture}}
    \vspace{-0.4cm}
    \caption{Effect of HNSW-seeding on $\mathbf{KNN}$ graph construction.}
    \label{fig:knng-construction-time}
    \end{small}\vspace{-0.5em}
\end{figure}

\begin{table}[t]
\centering
\small
\setlength{\tabcolsep}{4pt}
\caption{$k$NN-radius materialization ablation on GIST-1M.}
\label{tab:base-materialization-ablation}
\vspace{-0.4cm}
\begin{tabular}{lrrr}
\toprule
Variant & Recall@10 & Latency (ms) & Avg. Results \\
\midrule
$\HRNN$ & 0.9879 & 14.74 & 9.89 \\
Gold Radius & 0.9880 & 14.39 & 9.84 \\
No reverse-neighbor Lists & 0.9998 & 372.04 & 9.96 \\
\bottomrule
\end{tabular}\vspace{0.2em}
\end{table}

\stitle{Exp-6: Varying $k$.}
We evaluate $\HRNN$ under standard R$k$NN settings with varying values of $k \in {1,10,30,50,90}$.
For each dataset and each value of $k$, we evaluate the same $(m,\Theta)$ parameter grid used in Exp-1 and report the highest-throughput configuration satisfying target Recall@$k$ thresholds of 0.95 and 0.99. 

Fig.~\ref{fig:variable-k} shows that $\HRNN$ maintains stable recall--throughput trade-offs across the entire range of $k$. At the 0.99 recall target, increasing $k$ from 1 to 90 reduces throughput from 1245 to 400 QPS on SIFT, from 61 to 36 QPS on GIST, and from 279 to 105 QPS on MSMARCO. On Msong, the Recall@$k \geq 0.95$ and Recall@$k \geq 0.99$ curves coincide because the lowest-latency configuration already achieves Recall@$k \approx 0.999$ for all evaluated values of $k$, while maintaining a throughput of approximately 800--930 QPS.
These results demonstrate that $\HRNN$ remains robust across a wide range of target $k$ values. 
Cross-method comparisons are reported separately in Exp-1 under the common setting of $k=10$.


\begin{figure}[t]
    \centering
    \begin{small}
    \begin{tikzpicture}
    \begin{customlegend}[legend columns=4,
    legend entries={SIFT, GIST, Msong, MSMARCO},
    legend style={at={(0.5,1.18)},anchor=north,draw=none,font=\scriptsize,column sep=0.15cm}]
    \addlegendimage{line width=0.2mm, color=navy, mark=square, mark size=0.8mm}
    \addlegendimage{line width=0.2mm, color=amaranth, mark=triangle, mark size=0.8mm}
    \addlegendimage{line width=0.2mm, color=forestgreen, mark=o, mark size=0.8mm}
    \addlegendimage{line width=0.2mm, color=orange, mark=diamond, mark size=0.8mm}
    \end{customlegend}
    \end{tikzpicture}
    \\[-\lineskip]

    \subfloat[Target Recall@$k$: 0.95]{\vspace{-2mm}
    \begin{tikzpicture}[scale=1]
    \begin{axis}[
        height=\columnwidth/2.750,
        width=0.48\columnwidth,
        xmode=log,
        ymode=log,
        xmin=0.8, xmax=110,
        ymin=30, ymax=2200,
        xtick={1,10,30,50,90},
        xticklabels={1,10,30,50,90},
        log ticks with fixed point,
        xlabel={$k$}, xlabel style={yshift=+3pt},
        ylabel=QPS (1/s), ylabel style={yshift=-4pt},
        label style={font=\scriptsize},
        tick label style={font=\scriptsize},
        ymajorgrids=true, xmajorgrids=true, grid style=dashed,
    ]
    \addplot[line width=0.2mm, color=navy, mark=square, mark size=0.6mm]
    plot coordinates {(1,1672) (10,1281) (30,954) (50,1010) (90,859)};
    \addplot[line width=0.2mm, color=amaranth, mark=triangle, mark size=0.6mm]
    plot coordinates {(1,178) (10,95) (30,64) (50,61) (90,63)};
    \addplot[line width=0.2mm, color=forestgreen, mark=o, mark size=0.6mm]
    plot coordinates {(1,805) (10,934) (30,934) (50,901) (90,798)};
    \addplot[line width=0.2mm, color=orange, mark=diamond, mark size=0.6mm]
    plot coordinates {(1,658) (10,483) (30,531) (50,499) (90,344)};
    \end{axis}
    \end{tikzpicture}}
    \subfloat[Target Recall@$k$: 0.99]{\vspace{-2mm}
    \begin{tikzpicture}[scale=1]
    \begin{axis}[
        height=\columnwidth/2.750,
        width=0.48\columnwidth,
        xmode=log,
        ymode=log,
        xmin=0.8, xmax=110,
        ymin=30, ymax=2200,
        xtick={1,10,30,50,90},
        xticklabels={1,10,30,50,90},
        log ticks with fixed point,
        xlabel={$k$}, xlabel style={yshift=+3pt},
        ylabel=QPS (1/s), ylabel style={yshift=-4pt},
        label style={font=\scriptsize},
        tick label style={font=\scriptsize},
        ymajorgrids=true, xmajorgrids=true, grid style=dashed,
    ]
    \addplot[line width=0.2mm, color=navy, mark=square, mark size=0.6mm]
    plot coordinates {(1,1245) (10,866) (30,594) (50,603) (90,400)};
    \addplot[line width=0.2mm, color=amaranth, mark=triangle, mark size=0.6mm]
    plot coordinates {(1,61) (10,42) (30,41) (50,36) (90,36)};
    \addplot[line width=0.2mm, color=forestgreen, mark=o, mark size=0.6mm]
    plot coordinates {(1,805) (10,934) (30,934) (50,901) (90,798)};
    \addplot[line width=0.2mm, color=orange, mark=diamond, mark size=0.6mm]
    plot coordinates {(1,279) (10,174) (30,148) (50,105) (90,105)};
    \end{axis}
    \end{tikzpicture}}
    \vspace{-0.4cm}
    \caption{Performance of $\mathbf{HRNN}$ across different $k$ values.}
    \label{fig:variable-k}
    \end{small}
\end{figure}

\stitle{Exp-7: Insertion-Based Maintenance.}
We evaluate insertion-based maintenance by varying the initial batch fraction $s$: the first $s\cdot n$ points are batch-built and the remaining points are inserted. 
Thus $s=1$ is pure batch construction and $s=0$ is pure insertion construction. For each $s$, we report the highest-QPS operating point reaching each target Recall@10.

Fig.~\ref{fig:insertion-pareto} shows that insertion-based maintenance largely preserves the recall-QPS trade-off. 
On SIFT, pure insertion reaches 1429, 1266, and 775 QPS at Recall@10 targets $0.90$, $0.95$, and $0.99$, respectively, compared with 1786, 1370, and 813 QPS for pure batch construction (that is, the index is constructed once over the entire dataset). 
On GIST, pure insertion is comparable to pure batch at the $0.90$ and $0.95$ targets, reaching 140/89 QPS versus 133/86 QPS. The maintained index thus does not collapse as the insertion fraction grows.

The main overhead is construction time, and this overhead is expected because insertion-based construction synchronizes multiple materialized views after each arriving point. Building entirely through insertions costs 155s versus 86s on SIFT and 760s versus 170s on GIST, i.e., $1.80\times$ and $4.47\times$ more than pure batch construction. This cost comes from maintaining the $\HNSW$ graph, ranked $\KNN$ graph, materialized radius estimates, and reverse-neighbor lists during insertion. Overall, $\HRNN$ supports continuous arrivals by paying additional write-side maintenance cost while keeping search efficiency stable.


\begin{figure}[t]
    \centering
    \begin{small}

    \newlength{\figcolw}
    \setlength{\figcolw}{\columnwidth}

    \makebox[0.66\figcolw][c]{%
    \begin{tikzpicture}
    \begin{customlegend}[
        legend columns=3,
        legend entries={$r\!\geq\!0.90$,$r\!\geq\!0.95$,$r\!\geq\!0.99$},
        legend style={
            at={(0,0)},
            anchor=center,
            draw=none,
            font=\scriptsize,
            column sep=0.08cm
        }]
    \addlegendimage{line width=0.3mm, color=navy, mark=*, mark size=0.8mm}
    \addlegendimage{line width=0.3mm, color=amaranth, mark=square, mark size=0.8mm}
    \addlegendimage{line width=0.4mm, color=orange, mark=triangle, mark size=0.8mm}
    \end{customlegend}
    \end{tikzpicture}}%
    \makebox[0.31\figcolw][c]{%
    \begin{tikzpicture}
    \begin{customlegend}[
        legend columns=2,
        legend entries={SIFT,GIST},
        legend style={
            at={(0,0)},
            anchor=center,
            draw=none,
            font=\scriptsize,
            column sep=0.1cm
        }]
    \addlegendimage{line width=0.3mm, color=navy, mark=diamond, mark size=0.8mm}
    \addlegendimage{line width=0.3mm, color=amaranth, mark=triangle, mark size=0.8mm}
    \end{customlegend}
    \end{tikzpicture}}%

    \vspace{-1.0mm}

    \noindent
    \begin{minipage}[t]{0.32\figcolw}
    \centering
    \begin{tikzpicture}
    \begin{axis}[
        scale only axis,
        width=0.235\figcolw,
        height=\columnwidth/1.8,
        xlabel={$s$},
        xlabel style={yshift=+3pt},
        ylabel={QPS (1/s)},
        ylabel style={yshift=-6pt},
        xmin=-0.05, xmax=1.05,
        ymin=600, ymax=1900,
        ytick={700,1200,1700},
        label style={font=\scriptsize},
        tick label style={font=\scriptsize},
        ymajorgrids=true,
        xmajorgrids=true,
        grid style=dashed,
    ]
    \addplot[color=navy,line width=0.3mm,solid,mark=*,mark size=0.7mm]
    coordinates {
        (0.00,1428.6) (0.10,1562.5) (0.25,1492.5) (0.50,1724.1)
        (0.75,1388.9) (0.90,1492.5) (0.99,1666.7) (1.00,1785.7)};
    \addplot[color=amaranth,line width=0.3mm,mark=square,mark size=0.7mm]
    coordinates {
        (0.00,1265.8) (0.10,1219.5) (0.25,1282.1) (0.50,1369.9)
        (0.75,1190.5) (0.90,1149.4) (0.99,1333.3) (1.00,1369.9)};
    \addplot[color=orange,line width=0.4mm,solid,mark=triangle,mark size=0.9mm]
    coordinates {
        (0.00,775.2) (0.10,806.5) (0.25,813.0) (0.50,877.2)
        (0.75,714.3) (0.90,719.4) (0.99,869.6) (1.00,813.0)};
    \end{axis}
    \end{tikzpicture}

    \vspace{-1.5mm}
    {\scriptsize\textbf{(a) SIFT}}
    \end{minipage}%
    \hfill
    \begin{minipage}[t]{0.32\figcolw}
    \centering
    \begin{tikzpicture}
    \begin{axis}[
        scale only axis,
        width=0.255\figcolw,
        height=\columnwidth/1.8,
        xlabel={$s$},
        xlabel style={yshift=+3pt},
        xmin=-0.05, xmax=1.05,
        ymin=25, ymax=150,
        ytick={40,80,120},
        label style={font=\scriptsize},
        tick label style={font=\scriptsize},
        ymajorgrids=true,
        xmajorgrids=true,
        grid style=dashed,
    ]
    \addplot[color=navy,line width=0.3mm,solid,mark=*,mark size=0.7mm]
    coordinates {
        (0.00,140.3) (0.10,131.4) (0.25,134.0)
        (0.75,119.3) (0.90,131.2) (1.00,133.3)};
    \addplot[color=amaranth,line width=0.3mm,mark=square,mark size=0.7mm]
    coordinates {
        (0.00,89.4) (0.10,88.7) (0.25,89.1)
        (0.75,77.6) (0.90,84.0) (1.00,85.7)};
    \addplot[color=orange,line width=0.4mm,solid,mark=triangle,mark size=0.9mm]
    coordinates {
        (0.00,34.0) (0.10,41.8) (0.25,39.3)
        (0.75,44.5) (0.90,43.3) (1.00,42.5)};
    \end{axis}
    \end{tikzpicture}

    \vspace{-1.5mm}
    {\scriptsize\textbf{(b) GIST}}
    \end{minipage}%
    \hfill
    \begin{minipage}[t]{0.32\figcolw}
    \centering
    \begin{tikzpicture}
    \begin{axis}[
        scale only axis,
        width=0.235\figcolw,
        height=\columnwidth/1.8,
        xlabel={$s$},
        xlabel style={yshift=+3pt},
        ylabel={Build time (s)},
        ylabel style={yshift=-6pt},
        xmin=-0.05, xmax=1.05,
        ymin=0, ymax=820,
        ytick={0,400,800},
        label style={font=\scriptsize},
        tick label style={font=\scriptsize},
        ymajorgrids=true,
        xmajorgrids=true,
        grid style=dashed,
    ]
    \addplot[color=navy,line width=0.3mm,solid,mark=diamond,mark size=0.7mm]
    coordinates {
        (0.00,155) (0.10,152) (0.25,152) (0.50,130)
        (0.75,115) (0.90,95) (0.99,91) (1.00,86)};
    \addplot[color=amaranth,line width=0.3mm,mark=triangle,mark size=0.9mm]
    coordinates {
        (0.00,760) (0.10,717) (0.25,610) (0.50,565)
        (0.75,424) (0.90,275) (0.99,195) (1.00,170)};
    \end{axis}
    \end{tikzpicture}

    \vspace{-1.5mm}
    {\scriptsize\textbf{(c) Build time}}
    \end{minipage}

    \vspace{-0.4cm}
    \caption{Impact of maintenance on query throughput and construction time. Panels (a)--(b) show throughput at Recall@10 targets of 0.90, 0.95, and 0.99; panel (c) shows total construction time.}
    \label{fig:insertion-pareto}
    
    \end{small}
\end{figure}


\begin{figure}[t]
    
    \centering
    \begin{small}
    \begin{tikzpicture}
    \begin{customlegend}[legend columns=4, legend entries={HRNN, HAMG, HNSW-SFT, HNSW-RDT}, legend style={at={(0.5,1.15)},anchor=north,draw=none,font=\scriptsize,column sep=0.1cm}]
    \addlegendimage{line width=0.2mm, color=navy, mark=square, mark size=0.8mm}
    \addlegendimage{line width=0.2mm, color=amaranth, mark=triangle, mark size=0.8mm}
    \addlegendimage{line width=0.2mm, color=blue, mark=diamond, mark size=0.8mm}
    \addlegendimage{line width=0.2mm, color=orange, mark=pentagon, mark size=0.8mm}
    \end{customlegend}
    \end{tikzpicture}
    \\[-0.85em]
    \subfloat[SIFT (d=128)]{\vspace{-2mm}
    \begin{tikzpicture}[scale=1]
    \begin{axis}[
        height=\columnwidth/2.70,
        width=0.42\columnwidth, %
        ymode=log,
        xmin=1.5, xmax=10.5,
        xtick={2,4,6,8,10},
        xticklabels={2M,4M,6M,8M,10M},
        ymin=1.076, ymax=2399,
        xlabel={$|D|$ (M)}, xlabel style={yshift=+3pt},
        ylabel={QPS (1/s)}, ylabel style={yshift=-4pt},
        label style={font=\scriptsize},
        tick label style={font=\scriptsize},
        ymajorgrids=true, xmajorgrids=true, grid style=dashed,
    ]
    \addplot[line width=0.2mm, color=navy, mark=square, mark size=0.65mm]
    plot coordinates {(2, 1370.8) (4, 901.7) (6, 834.3) (8, 995) (10, 842)};
    \addplot[line width=0.2mm, color=amaranth, mark=triangle, mark size=0.65mm]
    plot coordinates {(2, 6.7828) (4, 5.7275) (6, 1.9559)};
    \addplot[line width=0.2mm, color=blue, mark=diamond, mark size=0.65mm]
    plot coordinates {(2, 862.366) (4, 512.899) (6, 480.307) (8, 365.123) (10, 225.164)};
    \addplot[line width=0.2mm, color=orange, mark=pentagon, mark size=0.65mm]
    plot coordinates {(2, 131.742) (4, 107.562) (6, 94.141) (8, 100.282) (10, 70.992)};
    \end{axis}
    \end{tikzpicture}}
    \subfloat[MSMARCO (d=1024)]{\vspace{-2mm}
    \begin{tikzpicture}[scale=1]
    \begin{axis}[
        height=\columnwidth/2.70,
        width=0.42\columnwidth, %
        ymode=log,
        xmin=1.5, xmax=10.5,
        xtick={2,4,6,8,10},
        xticklabels={2M,4M,6M,8M,10M},
        ymin=0.7836, ymax=997.9,
        xlabel={$|D|$ (M)}, xlabel style={yshift=+3pt},
        ylabel={QPS (1/s)}, ylabel style={yshift=-4pt},
        label style={font=\scriptsize},
        tick label style={font=\scriptsize},
        ymajorgrids=true, xmajorgrids=true, grid style=dashed,
    ]
    \addplot[line width=0.2mm, color=navy, mark=square, mark size=0.65mm]
    plot coordinates {(2, 570.2) (4, 510.2) (6, 488.9) (8, 406.3) (10, 360.2)};
    \addplot[line width=0.2mm, color=amaranth, mark=triangle, mark size=0.65mm]
    plot coordinates {(2, 1.6686) (4, 1.4247)};
    \addplot[line width=0.2mm, color=blue, mark=diamond, mark size=0.65mm]
    plot coordinates {(2, 28.913) (4, 35.864) (6, 35.277) (8, 32.624) (10, 32.287)};
    \addplot[line width=0.2mm, color=orange, mark=pentagon, mark size=0.65mm]
    plot coordinates {(2, 11.286) (4, 13.259) (6, 13.752) (8, 11.692) (10, 11.294)};
    \end{axis}
    \end{tikzpicture}}
    \vspace{-0.4cm}
    \caption{Cross-method scalability at recall@10 $\geq 0.95$.}
    \label{fig:scalability-method-qps}
    \end{small}
    \vspace{0.5em}
\end{figure}

\stitle{Exp-8: Scalability to 10M Vectors.}
We evaluate scalability on the SIFT and MSMARCO datasets with sizes $|D| \in \{2,4,6,8,10\}$M. We compare $\HRNN$, $\HNSW$-$\SFT$, and $\HNSW$-$\RDT$ across all dataset sizes. For $\HAMG$, we report only completed runs; larger instances are omitted when construction time exceeds $10^4$ seconds and query throughput remains below 10 QPS.
Fig.\ref{fig:scalability-method-qps} compares query throughput at a common target of Recall@10 $\geq 0.95$. On SIFT, $\HRNN$ is $1.59$--$3.74\times$ faster than $\HNSW$-$\SFT$ and $8.4$--$11.9\times$ faster than $\HNSW$-$\RDT$ across all dataset sizes. The advantage over $\HAMG$ is even more pronounced. For example, on SIFT-6M, $\HRNN$ achieves 834.3 QPS, whereas $\HAMG$ achieves only 1.96 QPS.
These results demonstrate the superior scalability of $\HRNN$ on large-scale datasets. 
\ifthenelse{\isundefined{\themacro}}{Additional scalability results are provided in our technical report\cite{technicalreport}.}{}

\ifthenelse{\isundefined{\themacro}}{}{
Fig.~\ref{fig:scalability-pareto} shows that $\HRNN$ maintains a stable recall--throughput trade-off up to 10M vectors. At Recall@10 $\geq 0.95$, $\HRNN$ achieves 842 QPS on SIFT-10M and 360.2 QPS on MSMARCO-10M. It further reaches Recall@10 = 0.9975 on SIFT-10M while sustaining 211.7 QPS.

Fig.~\ref{fig:scalability-method-build} reports index construction time as the dataset size increases. On SIFT-10M, $\HRNN$ builds the index in 2335\,s, compared with 559\,s for $\HNSW$. On MSMARCO-10M, $\HRNN$ requires 5006\,s, compared with 2433\,s for $\HNSW$. This additional offline cost stems from constructing the ranked $\KNN$ graph and materializing the reverse-neighbor lists. Nevertheless, it remains substantially lower than that of $\HAMG$, which already requires 11405\,s on SIFT-6M and 37837\,s on MSMARCO-4M.


\providecolor{sizeViridis2M}{RGB}{68,  1, 84}
\providecolor{sizeViridis4M}{RGB}{59, 82,139}
\providecolor{sizeViridis6M}{RGB}{33,144,140}
\providecolor{sizeViridis8M}{RGB}{93,201, 99}
\providecolor{sizeViridis10M}{RGB}{255, 166, 76}

\begin{figure}[t]
    \centering
    \begin{small}
    \begin{tikzpicture}
    \begin{customlegend}[legend columns=5, legend entries={2M, 4M, 6M, 8M, 10M}, legend style={at={(0.5,1.15)},anchor=north,draw=none,font=\scriptsize,column sep=0.1cm}]
    \addlegendimage{line width=0.2mm, color=sizeViridis2M, mark=square, mark size=0.8mm}
    \addlegendimage{line width=0.2mm, color=sizeViridis4M, mark=triangle, mark size=0.8mm}
    \addlegendimage{line width=0.2mm, color=sizeViridis6M, mark=o, mark size=0.8mm}
    \addlegendimage{line width=0.2mm, color=sizeViridis8M, mark=diamond, mark size=0.8mm}
    \addlegendimage{line width=0.2mm, color=sizeViridis10M, mark=pentagon, mark size=0.8mm}
    \end{customlegend}
    \end{tikzpicture}
    \\[-\lineskip]
    \subfloat[SIFT (d=128)]{\vspace{-2mm}
    \begin{tikzpicture}[scale=1]
    \begin{axis}[
        height=\columnwidth/2.20,
        width=0.42\columnwidth, %
        xmin=0.85, xmax=1.0,
        ymin=148.19, ymax=2493.45,
        ymode=log,
        xlabel=Recall@10, xlabel style={yshift=+3pt},
        ylabel=QPS (1/s), ylabel style={yshift=-4pt},
        label style={font=\scriptsize},
        tick label style={font=\scriptsize},
        ymajorgrids=true, xmajorgrids=true, grid style=dashed,
    ]
    \addplot[line width=0.2mm, color=sizeViridis2M, mark=square, mark size=0.6mm]
    plot coordinates {(0.9087, 1662.300) (0.9526, 1370.800) (0.9660, 1170.900) (0.9744, 1036.200) (0.9768, 1029.100) (0.9862, 1026.000) (0.9896, 838.300) (0.9940, 626.900) (0.9940, 620.400) (0.9957, 481.000) (0.9957, 340.900)};
    \addplot[line width=0.2mm, color=sizeViridis4M, mark=triangle, mark size=0.6mm]
    plot coordinates {(0.8847, 1504.900) (0.9488, 1101.900) (0.9741, 901.700) (0.9833, 728.400) (0.9877, 521.100) (0.9936, 502.400) (0.9939, 457.400) (0.9980, 454.800) (0.9980, 389.300) (0.9980, 280.900)};
    \addplot[line width=0.2mm, color=sizeViridis6M, mark=o, mark size=0.6mm]
    plot coordinates {(0.8953, 910.100) (0.9620, 834.300) (0.9831, 706.400) (0.9873, 645.100) (0.9934, 593.800) (0.9979, 476.400) (0.9979, 223.500)};
    \addplot[line width=0.2mm, color=sizeViridis8M, mark=diamond, mark size=0.6mm]
    plot coordinates {(0.8632, 1354.700) (0.9409, 1115.800) (0.9561, 995.000) (0.9786, 778.500) (0.9820, 583.900) (0.9887, 450.500) (0.9910, 429.400) (0.9932, 413.500) (0.9932, 379.000) (0.9932, 274.500)};
    \addplot[line width=0.2mm, color=sizeViridis10M, mark=pentagon, mark size=0.6mm]
    plot coordinates {(0.8666, 1012.900) (0.9158, 864.900) (0.9679, 842.000) (0.9728, 664.000) (0.9852, 500.900) (0.9877, 400.700) (0.9877, 366.400) (0.9901, 351.800) (0.9926, 312.200) (0.9975, 211.700)};
    \end{axis}
    \end{tikzpicture}}
    \subfloat[MSMARCO (d=1024)]{\vspace{-2mm}
    \begin{tikzpicture}[scale=1]
    \begin{axis}[
        height=\columnwidth/2.20,
        width=0.42\columnwidth, %
        xmin=0.85, xmax=1.0,
        ymin=64.96, ymax=1124.7,
        ymode=log,
        xlabel=Recall@10, xlabel style={yshift=+3pt},
        ylabel=QPS (1/s), ylabel style={yshift=-4pt},
        label style={font=\scriptsize},
        tick label style={font=\scriptsize},
        ymajorgrids=true, xmajorgrids=true, grid style=dashed,
    ]
    \addplot[line width=0.2mm, color=sizeViridis2M, mark=square, mark size=0.6mm]
    plot coordinates {(0.9065, 749.800) (0.9428, 643.700) (0.9533, 570.200) (0.9545, 537.500) (0.9705, 470.300) (0.9773, 416.700) (0.9864, 309.000) (0.9910, 183.600) (0.9932, 176.500) (0.9939, 166.400)};
    \addplot[line width=0.2mm, color=sizeViridis4M, mark=triangle, mark size=0.6mm]
    plot coordinates {(0.9160, 528.600) (0.9505, 510.200) (0.9677, 403.400) (0.9680, 375.800) (0.9718, 345.400) (0.9744, 344.500) (0.9754, 311.400) (0.9810, 289.300) (0.9825, 273.400) (0.9865, 242.400) (0.9922, 164.900) (0.9943, 126.200)};
    \addplot[line width=0.2mm, color=sizeViridis6M, mark=o, mark size=0.6mm]
    plot coordinates {(0.9623, 488.900) (0.9721, 438.600) (0.9735, 394.000) (0.9779, 352.400) (0.9781, 333.800) (0.9802, 298.600) (0.9821, 268.300) (0.9857, 267.200) (0.9871, 195.000) (0.9888, 192.300) (0.9919, 157.100) (0.9934, 146.100)};
    \addplot[line width=0.2mm, color=sizeViridis8M, mark=diamond, mark size=0.6mm]
    plot coordinates {(0.9131, 539.100) (0.9456, 468.600) (0.9621, 406.300) (0.9737, 326.600) (0.9786, 273.500) (0.9797, 230.700) (0.9817, 205.700) (0.9841, 181.200) (0.9854, 145.000) (0.9873, 116.600)};
    \addplot[line width=0.2mm, color=sizeViridis10M, mark=pentagon, mark size=0.6mm]
    plot coordinates {(0.8862, 502.300) (0.9277, 492.200) (0.9400, 434.200) (0.9484, 416.900) (0.9534, 360.200) (0.9546, 341.600) (0.9592, 319.200) (0.9621, 280.300) (0.9638, 237.100) (0.9660, 228.500) (0.9684, 203.000) (0.9707, 176.600) (0.9729, 150.500) (0.9763, 112.200)};
    \end{axis}
    \end{tikzpicture}}
    \vspace{-0.4cm}
    \caption{Scalability of $\HRNN$ across dataset sizes.}
    \label{fig:scalability-pareto}
    \end{small}
\end{figure}

\begin{figure}[t]
    \centering
    \begin{small}
    \begin{tikzpicture}
    \begin{customlegend}[legend columns=3, legend entries={HRNN, HAMG, HNSW}, legend style={at={(0.5,1.15)},anchor=north,draw=none,font=\scriptsize,column sep=0.1cm}]
    \addlegendimage{line width=0.2mm, color=navy, mark=square, mark size=0.8mm}
    \addlegendimage{line width=0.2mm, color=amaranth, mark=triangle, mark size=0.8mm}
    \addlegendimage{line width=0.2mm, color=blue, mark=diamond, mark size=0.8mm}
    \end{customlegend}
    \end{tikzpicture}
    \\[-\lineskip]
    \subfloat[SIFT (d=128)]{\vspace{-2mm}
    \begin{tikzpicture}[scale=1]
    \begin{axis}[
        height=\columnwidth/2.20,
        width=0.42\columnwidth, %
        ymode=log,
        xmin=1.5, xmax=10.5,
        xtick={2,4,6,8,10},
        xticklabels={2M,4M,6M,8M,10M},
        ymin=29.4, ymax=1.825e+04,
        xlabel={$|D|$ (M)}, xlabel style={yshift=+3pt},
        ylabel={Build time (s)}, ylabel style={yshift=-4pt},
        label style={font=\scriptsize},
        tick label style={font=\scriptsize},
        ymajorgrids=true, xmajorgrids=true, grid style=dashed,
    ]
    \addplot[line width=0.2mm, color=navy, mark=square, mark size=0.65mm]
    plot coordinates {(2, 406) (4, 806) (6, 1287) (8, 1716) (10, 2335)};
    \addplot[line width=0.2mm, color=amaranth, mark=triangle, mark size=0.65mm]
    plot coordinates {(2, 2920) (4, 6813) (6, 11405)};
    \addplot[line width=0.2mm, color=blue, mark=diamond, mark size=0.65mm]
    plot coordinates {(2, 115) (4, 221) (6, 328) (8, 443) (10, 559)};
    \end{axis}
    \end{tikzpicture}}
    \subfloat[MSMARCO (d=1024)]{\vspace{-2mm}
    \begin{tikzpicture}[scale=1]
    \begin{axis}[
        height=\columnwidth/2.20,
        width=0.42\columnwidth, %
        ymode=log,
        xmin=1.5, xmax=10.5,
        xtick={2,4,6,8,10},
        xticklabels={2M,4M,6M,8M,10M},
        ymin=119.4, ymax=6.054e+04,
        xlabel={$|D|$ (M)}, xlabel style={yshift=+3pt},
        ylabel={Build time (s)}, ylabel style={yshift=-4pt},
        label style={font=\scriptsize},
        tick label style={font=\scriptsize},
        ymajorgrids=true, xmajorgrids=true, grid style=dashed,
    ]
    \addplot[line width=0.2mm, color=navy, mark=square, mark size=0.65mm]
    plot coordinates {(2, 853) (4, 1803) (6, 3860) (8, 4491) (10, 5006)};
    \addplot[line width=0.2mm, color=amaranth, mark=triangle, mark size=0.65mm]
    plot coordinates {(2, 14508) (4, 37837)};
    \addplot[line width=0.2mm, color=blue, mark=diamond, mark size=0.65mm]
    plot coordinates {(2, 502) (4, 1012) (6, 1669) (8, 2095) (10, 2433)};
    \end{axis}
    \end{tikzpicture}}
    \vspace{-0.4cm}
    \caption{Index construction scalability across dataset sizes.}
    \label{fig:scalability-method-build}
    \end{small}
\end{figure}
}

\section{Related Work}
As introduced in Section~\ref{sec:problem-analysis}, existing methods for (approximate) R$k$NN search generally follow a filter-and-verification framework. For most existing methods, the verification stage is similar: they issue a standard $k$NN search for each candidate to determine whether the candidate belongs to the R$k$NN result set of the query vector. Therefore, existing methods mainly differ in how they generate the candidate set. Based on this distinction, we categorize existing methods into three groups below.

\stitle{Classical Methods.}
Early studies mainly focused on exact R$k$NN search in low-dimensional vector spaces. One category is partition-based methods~\cite{TPL-2004-VLDB, TPL2-2007-Springer, SLICE-2014-ICDE, FINCH-2008-VLDB}, which divide the data space into geometric regions such as half-spaces or influence zones. These methods prune impossible candidates based on the spatial partition and the relative position between the query vector and the partitioned regions.
Another category is precomputation-based methods~\cite{MRKNNCop-2006-SIGMOD, ERkNN-2005-CIKM}. These methods typically use tree-based spatial indexes (e.g., M-trees) and derive conservative distance bounds for subtrees to prune unpromising points during query processing.
However, these classical methods suffer severely from the curse of dimensionality. In high-dimensional spaces, partition-based methods lose their discriminative power and often degenerate into scanning large portions of the dataset. Precomputation-based methods also incur high index construction and maintenance costs due to expensive distance-bound computation.

\stitle{Expansion-Based Methods.}
To scale R$k$NN queries to high-dimensional vectors, expansion-based methods such as $\SFT$~\cite{SFT-2003} and $\RDT$~\cite{RDT-2017-VLDB} were proposed. These methods aim to find approximate R$k$NN results by progressively exploring nearby neighbors around the query vector $q$ until a stopping condition is satisfied. As analyzed in Section~\ref{sec:problem-analysis}, these methods become inefficient when nearby neighbors do not align well with the true R$k$NN results, since they must explore a large neighborhood region to maintain high search accuracy.

\stitle{The State-of-the-Art and Beyond.}
The state-of-the-art method $\HAMG$~\cite{HAMG-ICDE-2024} adapts proximity graphs for R$k$NN search. $\HAMG$ leverages the key property of the $\kw{MRN}$ graph that guarantees true R$k$NN results lie within the $k$-hop neighborhood of the query vector on the graph. Based on this property, $\HAMG$ retrieves nearby graph neighbors as candidates. However, $\HAMG$ may still explore a large number of unnecessary candidates because the graph neighborhood can be substantially larger than the true R$k$NN result set.

Proximity graphs such as $\HNSW$~\cite{HNSW-2020-TPAMI} and $\kw{NSG}$~\cite{NSG-VLDB-2019-Fu} have become the dominant paradigm for high-dimensional approximate $k$NN search. 
$\HAMG$ elegantly exploits the properties of the $\kw{MRN}$ graph, which is also a proximity graph. 
Since exact construction of an $\kw{MRN}$ graph is expensive, $\kw{HAMG}$ heuristically adapts the $\HNSW$ graph to better approximate the $\kw{MRN}$ structure. Although this heuristic adaptation weakens the theoretical guarantees of the original $\kw{MRN}$ graph, it remains effective in practice.
There is a large body of work has studied high-dimensional $k$NN search from different perspectives, including distance computation optimization~\cite{ADSampling-2023-SIGMOD, MRQ-2026-VLDB, DDC-2025-ICDE, DADE-2024-VLDB}, novel index structures~\cite{SYMQG-2025-SIGMOD,Starling-2024-SIGMOD, AiSAQ-2025-arxiv}, and search optimization techniques~\cite{PipeANN-2026-OSDI, OptimizationsGraphbasedDiskresident--2026-li, FINGER-2023-WWW, ADA-NNS-2025-WWW, PEO-2024-ICML}. Since this paper focuses on R$k$NN search, we omit detailed a discussion of these methods.

\section{Conclusion}
This paper studies approximate R$k$NN search over high-dimensional vectors. Motivated by the limitations of existing methods, we propose $\HRNN$, a hybrid graph index that treats nearby neighbors as proxy points and materializes $k$NN-radius offline to avoid costly online computation. 
$\HRNN$ integrates a navigation graph, a ranked $\KNN$ graph, and reverse-neighbor lists, together with efficient index construction and append-only maintenance algorithms. 
Experiments on real-world datasets demonstrate that $\HRNN$ consistently outperforms existing methods in both efficiency and scalability while maintaining high search accuracy.

\ifthenelse{\isundefined{\themacro}}{\newpage}{}

\balance
\bibliographystyle{ACM-Reference-Format}
\bibliography{reference}

@String{Computing = "Computing" }

@String{Computer = "{IEEE} Computer" }

@ArtifactSoftware{R,
    title = {R: A Language and Environment for Statistical Computing},
    author = {{R Core Team}},
    organization = {R Foundation for Statistical Computing},
    address = {Vienna, Austria},
    year = {2019},
    url = {https://www.R-project.org/},
}

@article{LLM-RAG-NIPS-2020,
  title={Retrieval-augmented generation for knowledge-intensive nlp tasks},
  author={Lewis, Patrick and Perez, Ethan and Piktus, Aleksandra and Petroni, Fabio and Karpukhin, Vladimir and Goyal, Naman and K{\"u}ttler, Heinrich and Lewis, Mike and Yih, Wen-tau and Rockt{\"a}schel, Tim and others},
  journal={Advances in Neural Information Processing Systems},
  volume={33},
  pages={9459--9474},
  year={2020}
}

@article{MQA-VLDB-2024-Wang,
author = {Wang, Mengzhao and Wu, Haotian and Ke, Xiangyu and Gao, Yunjun and Xu, Xiaoliang and Chen, Lu},
title = {An Interactive Multi-Modal Query Answering System with Retrieval-Augmented Large Language Models},
year = {2024},
issue_date = {August 2024},
publisher = {VLDB Endowment},
volume = {17},
number = {12},
issn = {2150-8097},
url = {https://doi.org/10.14778/3685800.3685868},
doi = {10.14778/3685800.3685868},
journal = {Proc. VLDB Endow.},
month = aug,
pages = {4333–4336},
numpages = {4}
}

@INPROCEEDINGS{HAMG-ICDE-2024,
  author={Song, Yitong and Wang, Kai and Yao, Bin and Chen, Zhida and Xie, Jiong and Li, Feifei},
  booktitle={2024 IEEE 40th International Conference on Data Engineering (ICDE)}, 
  title={Efficient Reverse $k$ Approximate Nearest Neighbor Search Over High-Dimensional Vectors}, 
  year={2024},
  volume={},
  number={},
  pages={4262-4274},
  doi={10.1109/ICDE60146.2024.00325}}

@inproceedings{Rdnn-tree-2001-ICDE,
  title={An index structure for efficient reverse nearest neighbor queries},
  author={Yang, Congyun and Lin, King-Ip},
  booktitle={Proceedings 17th International Conference on Data Engineering},
  pages={485--492},
  year={2001},
  organization={IEEE}
}

@inproceedings{RKNN-Road-Network-SSTD-2025,
author = {Song, Anbang and Yu, Ziqiang and Liu, Wei and Xu, Yating and Tao, Mingjin},
title = {BRkNN-light: Batch Processing of Reverse k-Nearest Neighbor Queries for Moving Objects on Road Networks},
year = {2025},
publisher = {Association for Computing Machinery},
address = {New York, NY, USA},
url = {https://doi.org/10.1145/3748777.3748791},
doi = {10.1145/3748777.3748791},
booktitle = {Proceedings of the 19th International Symposium on Spatial and Temporal Data},
pages = {80–89},
numpages = {10},
location = {
},
series = {SSTD '25}
}

@article{RNN-Outlier-Detection-2025-PatAna,
author = {Heydari-Gharaei, Reza and Sharifi, Rasoul and Kashef, Shima and Nezamabadi-pour, Hossein},
title = {A robust approach for outlier detection based on the ratio of number of reverse neighbors to neighbors},
year = {2025},
issue_date = {Mar 2025},
publisher = {Springer-Verlag},
address = {Berlin, Heidelberg},
volume = {28},
number = {1},
issn = {1433-7541},
url = {https://doi.org/10.1007/s10044-024-01372-y},
doi = {10.1007/s10044-024-01372-y},
journal = {Pattern Anal. Appl.},
month = jan,
numpages = {30}
}

@article{OutlierDetection-2015-PatRecLet,
author = {Bhattacharya, Gautam and Ghosh, Koushik and Chowdhury, Ananda S.},
title = {Outlier detection using neighborhood rank difference},
year = {2015},
issue_date = {August 2015},
publisher = {Elsevier Science Inc.},
address = {USA},
volume = {60},
number = {C},
issn = {0167-8655},
url = {https://doi.org/10.1016/j.patrec.2015.04.004},
doi = {10.1016/j.patrec.2015.04.004},
journal = {Pattern Recogn. Lett.},
month = aug,
pages = {24–31},
numpages = {8}
}

@ARTICLE{RKNN-Outlier-detection-TKDE-2014,
  author={Radovanović, Miloš and Nanopoulos, Alexandros and Ivanović, Mirjana},
  journal={IEEE Transactions on Knowledge and Data Engineering}, 
  title={Reverse Nearest Neighbors in Unsupervised Distance-Based Outlier Detection}, 
  year={2015},
  volume={27},
  number={5},
  pages={1369-1382},
  doi={10.1109/TKDE.2014.2365790}}

@article{RNNClustering-2019-InfoSys,
author = {Dai, Qi-Zhu and Xiong, Zhong-Yang and Xie, Jiang and Wang, Xiao-Xia and Zhang, Yu-Fang and Shang, Jia-Xing},
title = {A novel clustering algorithm based on the natural reverse nearest neighbor structure},
year = {2019},
issue_date = {Sep 2019},
publisher = {Elsevier Science Ltd.},
address = {GBR},
volume = {84},
number = {C},
issn = {0306-4379},
url = {https://doi.org/10.1016/j.is.2019.04.001},
doi = {10.1016/j.is.2019.04.001},
journal = {Inf. Syst.},
month = sep,
pages = {1–16},
numpages = {16}
}

@article{KR-DBSCAN-2021-ExpSys,
author = {Hu, Lihua and Liu, Hongkai and Zhang, Jifu and Liu, Aiqin},
title = {KR-DBSCAN: A density-based clustering algorithm based on reverse nearest neighbor and influence space},
year = {2022},
issue_date = {Dec 2021},
publisher = {Pergamon Press, Inc.},
address = {USA},
volume = {186},
number = {C},
issn = {0957-4174},
url = {https://doi.org/10.1016/j.eswa.2021.115763},
doi = {10.1016/j.eswa.2021.115763},
journal = {Expert Syst. Appl.},
month = dec,
numpages = {8}
}

@ARTICLE{RNN-Clustering-TKDE-2017,
  author={Bryant, Avory and Cios, Krzysztof},
  journal={IEEE Transactions on Knowledge and Data Engineering}, 
  title={RNN-DBSCAN: A Density-Based Clustering Algorithm Using Reverse Nearest Neighbor Density Estimates}, 
  year={2018},
  volume={30},
  number={6},
  pages={1109-1121},
  doi={10.1109/TKDE.2017.2787640}}

@inproceedings{Curse-of-dim-1998-ACM,
  title={Approximate nearest neighbors: towards removing the curse of dimensionality},
  author={Indyk, Piotr and Motwani, Rajeev},
  booktitle={Proceedings of the thirtieth annual ACM symposium on Theory of computing},
  pages={604--613},
  year={1998}
}

@article{RDT-2017-VLDB,
author = {Casanova, Guillaume and Englmeier, Elias and Houle, Michael E. and Kr\"{o}ger, Peer and Nett, Michael and Schubert, Erich and Zimek, Arthur},
title = {Dimensional testing for reverse k-nearest neighbor search},
year = {2017},
issue_date = {March 2017},
publisher = {VLDB Endowment},
volume = {10},
number = {7},
issn = {2150-8097},
url = {https://doi.org/10.14778/3067421.3067426},
doi = {10.14778/3067421.3067426},
journal = {Proc. VLDB Endow.},
month = mar,
pages = {769–780},
numpages = {12}
}

@article{RNN-Sampling-2019-PatRecLetter,
author = {Sadhukhan, Payel and Palit, Sarbani},
title = {Reverse-nearest neighborhood based oversampling for imbalanced, multi-label datasets},
year = {2019},
issue_date = {Jul 2019},
publisher = {Elsevier Science Inc.},
address = {USA},
volume = {125},
number = {C},
issn = {0167-8655},
url = {https://doi.org/10.1016/j.patrec.2019.08.009},
doi = {10.1016/j.patrec.2019.08.009},
journal = {Pattern Recogn. Lett.},
month = jul,
pages = {813–820},
numpages = {8}
}

@article{SMOTE-RkNN-2022-Elesevier,
title = {SMOTE-RkNN: A hybrid re-sampling method based on SMOTE and reverse k-nearest neighbors},
journal = {Information Sciences},
volume = {595},
pages = {70-88},
year = {2022},
issn = {0020-0255},
doi = {https://doi.org/10.1016/j.ins.2022.02.038},
url = {https://www.sciencedirect.com/science/article/pii/S0020025522001736},
author = {Aimin Zhang and Hualong Yu and Zhangjun Huan and Xibei Yang and Shang Zheng and Shang Gao}
}

@article{TauMG-2023-SIGMOD,
author = {Peng, Yun and Choi, Byron and Chan, Tsz Nam and Yang, Jianye and Xu, Jianliang},
title = {Efficient Approximate Nearest Neighbor Search in Multi-dimensional Databases},
year = {2023},
issue_date = {May 2023},
publisher = {Association for Computing Machinery},
address = {New York, NY, USA},
volume = {1},
number = {1},
url = {https://doi.org/10.1145/3588908},
doi = {10.1145/3588908},
journal = {Proc. ACM Manag. Data},
month = may,
articleno = {54},
numpages = {27}
}

@ARTICLE{SSSG-2022-TPAMI,
  author={Fu, Cong and Wang, Changxu and Cai, Deng},
  journal={IEEE Transactions on Pattern Analysis and Machine Intelligence}, 
  title={High Dimensional Similarity Search With Satellite System Graph: Efficiency, Scalability, and Unindexed Query Compatibility}, 
  year={2022},
  volume={44},
  number={8},
  pages={4139-4150},
  doi={10.1109/TPAMI.2021.3067706}}

@article{HNSW-2020-TPAMI,
author = {Malkov, Yu A. and Yashunin, D. A.},
title = {Efficient and Robust Approximate Nearest Neighbor Search Using Hierarchical Navigable Small World Graphs},
year = {2020},
issue_date = {April 2020},
publisher = {IEEE Computer Society},
address = {USA},
volume = {42},
number = {4},
issn = {0162-8828},
url = {https://doi.org/10.1109/TPAMI.2018.2889473},
doi = {10.1109/TPAMI.2018.2889473},
journal = {IEEE Trans. Pattern Anal. Mach. Intell.},
month = apr,
pages = {824–836},
numpages = {13}
}

@inproceedings{DiskANN-NIPS-2019,
 author = {Jayaram Subramanya, Suhas and Devvrit, Fnu and Simhadri, Harsha Vardhan and Krishnawamy, Ravishankar and Kadekodi, Rohan},
 booktitle = {Advances in Neural Information Processing Systems},
 editor = {H. Wallach and H. Larochelle and A. Beygelzimer and F. d\textquotesingle Alch\'{e}-Buc and E. Fox and R. Garnett},
 pages = {},
 publisher = {Curran Associates, Inc.},
 title = {DiskANN: Fast Accurate Billion-point Nearest Neighbor Search on a Single Node},
 url = {https://proceedings.neurips.cc/paper_files/paper/2019/file/09853c7fb1d3f8ee67a61b6bf4a7f8e6-Paper.pdf},
 volume = {32},
 year = {2019}
}

@article{Graph-index-survey-VLDB-2021-Wang,
author = {Wang, Mengzhao and Xu, Xiaoliang and Yue, Qiang and Wang, Yuxiang},
title = {A comprehensive survey and experimental comparison of graph-based approximate nearest neighbor search},
year = {2021},
issue_date = {July 2021},
publisher = {VLDB Endowment},
volume = {14},
number = {11},
issn = {2150-8097},
url = {https://doi.org/10.14778/3476249.3476255},
doi = {10.14778/3476249.3476255},
journal = {Proc. VLDB Endow.},
month = jul,
pages = {1964–1978},
numpages = {15}
}

@article{RoarGraph-VLDB-2024-Chen,
author = {Chen, Meng and Zhang, Kai and He, Zhenying and Jing, Yinan and Wang, X. Sean},
title = {RoarGraph: A Projected Bipartite Graph for Efficient Cross-Modal Approximate Nearest Neighbor Search},
year = {2024},
issue_date = {July 2024},
publisher = {VLDB Endowment},
volume = {17},
number = {11},
issn = {2150-8097},
url = {https://doi.org/10.14778/3681954.3681959},
doi = {10.14778/3681954.3681959},
journal = {Proc. VLDB Endow.},
month = jul,
pages = {2735–2749},
numpages = {15}
}

@inproceedings{NN-Descent-WWW-2011,
author = {Dong, Wei and Moses, Charikar and Li, Kai},
title = {Efficient k-nearest neighbor graph construction for generic similarity measures},
year = {2011},
isbn = {9781450306324},
publisher = {Association for Computing Machinery},
address = {New York, NY, USA},
url = {https://doi.org/10.1145/1963405.1963487},
doi = {10.1145/1963405.1963487},
booktitle = {Proceedings of the 20th International Conference on World Wide Web},
pages = {577–586},
numpages = {10},
location = {Hyderabad, India},
series = {WWW '11}
}

@article{Flash-Graph-Index-2025-SIGMOD,
author = {Wang, Mengzhao and Wu, Haotian and Ke, Xiangyu and Gao, Yunjun and Zhu, Yifan and Zhou, Wenchao},
title = {Accelerating Graph Indexing for ANNS on Modern CPUs},
year = {2025},
issue_date = {June 2025},
publisher = {Association for Computing Machinery},
address = {New York, NY, USA},
volume = {3},
number = {3},
url = {https://doi.org/10.1145/3725260},
doi = {10.1145/3725260},
journal = {Proc. ACM Manag. Data},
month = jun,
articleno = {123},
numpages = {29}
}

@article{Graph-tree-index-survey-2023-IEEE-DataEngBull,
  author       = {Zeyu Wang and
                  Peng Wang and
                  Themis Palpanas and
                  Wei Wang},
  title        = {Graph- and Tree-based Indexes for High-dimensional Vector Similarity
                  Search: Analyses, Comparisons, and Future Directions},
  journal      = {{IEEE} Data Eng. Bull.},
  volume       = {47},
  number       = {3},
  pages        = {3--21},
  year         = {2023},
  url          = {http://sites.computer.org/debull/A23sept/p3.pdf},
  bibsource    = {dblp computer science bibliography, https://dblp.org}
}

@article{NSG-VLDB-2019-Fu,
author = {Fu, Cong and Xiang, Chao and Wang, Changxu and Cai, Deng},
title = {Fast approximate nearest neighbor search with the navigating spreading-out graph},
year = {2019},
issue_date = {January 2019},
publisher = {VLDB Endowment},
volume = {12},
number = {5},
issn = {2150-8097},
url = {https://doi.org/10.14778/3303753.3303754},
doi = {10.14778/3303753.3303754},
journal = {Proc. VLDB Endow.},
month = jan,
pages = {461–474},
numpages = {14}
}

@article{AlphaCG-2026-SIGMOD-Lu,
author = {Li, Binhong and Yan, Xiao and Lu, Shangqi},
title = {Fast-Convergent Proximity Graphs for Approximate Nearest Neighbor Search},
year = {2026},
issue_date = {February 2026},
publisher = {Association for Computing Machinery},
address = {New York, NY, USA},
volume = {4},
number = {1},
url = {https://doi.org/10.1145/3786650},
doi = {10.1145/3786650},
journal = {Proc. ACM Manag. Data},
month = apr,
articleno = {36},
numpages = {24}
}

@inproceedings{Graph-Theory-2023-Indyk,
author = {Indyk, Piotr and Xu, Haike},
title = {Worst-case performance of popular approximate nearest neighbor search implementations: guarantees and limitations},
year = {2023},
publisher = {Curran Associates Inc.},
address = {Red Hook, NY, USA},
booktitle = {Proceedings of the 37th International Conference on Neural Information Processing Systems},
articleno = {2891},
numpages = {18},
location = {New Orleans, LA, USA},
series = {NIPS '23}
}

@article{Fast-Graph-2025-SIGMOD-Lu,
author = {Lu, Shangqi and Tao, Yufei},
title = {Proximity Graphs for Similarity Search: Fast Construction, Lower Bounds, and Euclidean Separation},
year = {2025},
issue_date = {November 2025},
publisher = {Association for Computing Machinery},
address = {New York, NY, USA},
volume = {3},
number = {5},
url = {https://doi.org/10.1145/3767716},
doi = {10.1145/3767716},
journal = {Proc. ACM Manag. Data},
month = nov,
articleno = {280},
numpages = {25}
}

@INPROCEEDINGS{LSG-ICDE-2025-Wang,
  author={Wang, Hongya and Wu, Wenlong and Luo, Cong and Bian, Aobei and Meng, Chunguang and Wu, Yishuo and Sun, Ji},
  booktitle={2025 IEEE 41st International Conference on Data Engineering (ICDE)}, 
  title={Boosting Accuracy and Efficiency for Vector Retrieval with Local Scaling Graph}, 
  year={2025},
  volume={},
  number={},
  pages={336-348},
  doi={10.1109/ICDE65448.2025.00032}}

@article{HVS-2021-VLDB,
  title={HVS: hierarchical graph structure based on voronoi diagrams for solving approximate nearest neighbor search},
  author={Lu, Kejing and Kudo, Mineichi and Xiao, Chuan and Ishikawa, Yoshiharu},
  journal={Proceedings of the VLDB Endowment},
  volume={15},
  number={2},
  pages={246--258},
  year={2021},
  publisher={VLDB Endowment}
}

@article{ELPIS-2023-VLDB,
  title={Elpis: Graph-based similarity search for scalable data science},
  author={Azizi, Ilias and Echihabi, Karima and Palpanas, Themis},
  journal={Proceedings of the VLDB Endowment},
  volume={16},
  number={6},
  pages={1548--1559},
  year={2023},
  publisher={VLDB Endowment}
}

@article{LMG-2025-ACMTDS-Xie,
  title={Graph based k-nearest neighbor search revisited},
  author={Xie, Jiadong and Yu, Jeffrey Xu and Liu, Yingfan},
  journal={ACM Transactions on Database Systems},
  volume={50},
  number={4},
  pages={1--30},
  year={2025},
  publisher={ACM New York, NY}
}

@inproceedings{InfluenceZone-2011-ICDE,
author = {Cheema, Muhammad Aamir and Lin, Xuemin and Zhang, Wenjie and Zhang, Ying},
title = {Influence zone: Efficiently processing reverse k nearest neighbors queries},
year = {2011},
isbn = {9781424489596},
publisher = {IEEE Computer Society},
address = {USA},
url = {https://doi.org/10.1109/ICDE.2011.5767904},
doi = {10.1109/ICDE.2011.5767904},
booktitle = {Proceedings of the 2011 IEEE 27th International Conference on Data Engineering},
pages = {577–588},
numpages = {12},
series = {ICDE '11}
}

@article{FINCH-2008-VLDB,
  title={Finch: Evaluating reverse k-nearest-neighbor queries on location data},
  author={Wu, Wei and Yang, Fei and Chan, Chee-Yong and Tan, Kian-Lee},
  journal={Proceedings of the VLDB Endowment},
  volume={1},
  number={1},
  pages={1056--1067},
  year={2008},
  publisher={VLDB Endowment}
}

@inproceedings{TPL-2004-VLDB,
author = {Tao, Yufei and Papadias, Dimitris and Lian, Xiang},
title = {Reverse kNN search in arbitrary dimensionality},
year = {2004},
isbn = {0120884690},
publisher = {VLDB Endowment},
booktitle = {Proceedings of the Thirtieth International Conference on Very Large Data Bases - Volume 30},
pages = {744–755},
numpages = {12},
location = {Toronto, Canada},
series = {VLDB '04}
}

@article{TPL2-2007-Springer,
author = {Tao, Yufei and Papadias, Dimitris and Lian, Xiang and Xiao, Xiaokui},
title = {Multidimensional reverse kNN search},
year = {2007},
issue_date = {Jul 2007},
publisher = {Springer-Verlag},
address = {Berlin, Heidelberg},
volume = {16},
number = {3},
issn = {1066-8888},
url = {https://doi.org/10.1007/s00778-005-0168-2},
doi = {10.1007/s00778-005-0168-2},
journal = {The VLDB Journal},
month = jul,
pages = {293–316},
numpages = {24}
}

@inproceedings{MRKNNCop-2006-SIGMOD,
author = {Achtert, Elke and B\"{o}hm, Christian and Kr\"{o}ger, Peer and Kunath, Peter and Pryakhin, Alexey and Renz, Matthias},
title = {Efficient reverse k-nearest neighbor search in arbitrary metric spaces},
year = {2006},
isbn = {1595934340},
publisher = {Association for Computing Machinery},
address = {New York, NY, USA},
url = {https://doi.org/10.1145/1142473.1142531},
doi = {10.1145/1142473.1142531},
booktitle = {Proceedings of the 2006 ACM SIGMOD International Conference on Management of Data},
pages = {515–526},
numpages = {12},
location = {Chicago, IL, USA},
series = {SIGMOD '06}
}

@inproceedings{SFT-2003,
author = {Singh, Amit and Ferhatosmanoglu, Hakan and Tosun, Ali \c{S}aman},
title = {High dimensional reverse nearest neighbor queries},
year = {2003},
isbn = {1581137230},
publisher = {Association for Computing Machinery},
address = {New York, NY, USA},
url = {https://doi.org/10.1145/956863.956882},
doi = {10.1145/956863.956882},
booktitle = {Proceedings of the Twelfth International Conference on Information and Knowledge Management},
pages = {91–98},
numpages = {8},
location = {New Orleans, LA, USA},
series = {CIKM '03}
}

@inproceedings{GreedySearch-2011-IJCAI,
author = {Hajebi, Kiana and Abbasi-Yadkori, Yasin and Shahbazi, Hossein and Zhang, Hong},
title = {Fast approximate nearest-neighbor search with k-nearest neighbor graph},
year = {2011},
isbn = {9781577355144},
publisher = {AAAI Press},
booktitle = {Proceedings of the Twenty-Second International Joint Conference on Artificial Intelligence - Volume Volume Two},
pages = {1312–1317},
numpages = {6},
location = {Barcelona, Catalonia, Spain},
series = {IJCAI'11}
}

@INPROCEEDINGS{SLICE-2014-ICDE,
  author={Yang, Shiyu and Cheema, Muhammad Aamir and Lin, Xuemin and Zhang, Ying},
  booktitle={2014 IEEE 30th International Conference on Data Engineering}, 
  title={SLICE: Reviving regions-based pruning for reverse k nearest neighbors queries}, 
  year={2014},
  volume={},
  number={},
  pages={760-771},
  doi={10.1109/ICDE.2014.6816698}}

@inproceedings{ERkNN-2005-CIKM,
author = {Xia, Chenyi and Hsu, Wynne and Lee, Mong Li},
title = {ERkNN: efficient reverse k-nearest neighbors retrieval with local kNN-distance estimation},
year = {2005},
isbn = {1595931406},
publisher = {Association for Computing Machinery},
address = {New York, NY, USA},
url = {https://doi.org/10.1145/1099554.1099697},
doi = {10.1145/1099554.1099697},
booktitle = {Proceedings of the 14th ACM International Conference on Information and Knowledge Management},
pages = {533–540},
numpages = {8},
location = {Bremen, Germany},
series = {CIKM '05}
}

@misc{MRQ-2026-VLDB,
      title={Quantization Meets Projection: A Happy Marriage for Approximate k-Nearest Neighbor Search}, 
      author={Mingyu Yang and Liuchang Jing and Wentao Li and Wei Wang},
      year={2026},
      eprint={2411.06158},
      archivePrefix={arXiv},
      primaryClass={cs.DB},
      url={https://arxiv.org/abs/2411.06158}, 
}

@article{ADSampling-2023-SIGMOD,
author = {Gao, Jianyang and Long, Cheng},
title = {High-Dimensional Approximate Nearest Neighbor Search: with Reliable and Efficient Distance Comparison Operations},
year = {2023},
issue_date = {June 2023},
publisher = {Association for Computing Machinery},
address = {New York, NY, USA},
volume = {1},
number = {2},
url = {https://doi.org/10.1145/3589282},
doi = {10.1145/3589282},
journal = {Proc. ACM Manag. Data},
month = jun,
articleno = {137},
numpages = {27}
}

@article{SYMQG-2025-SIGMOD,
author = {Gou, Yutong and Gao, Jianyang and Xu, Yuexuan and Long, Cheng},
title = {SymphonyQG: Towards Symphonious Integration of Quantization and Graph for Approximate Nearest Neighbor Search},
year = {2025},
issue_date = {February 2025},
publisher = {Association for Computing Machinery},
address = {New York, NY, USA},
volume = {3},
number = {1},
url = {https://doi.org/10.1145/3709730},
doi = {10.1145/3709730},
journal = {Proc. ACM Manag. Data},
month = feb,
articleno = {80},
numpages = {26}
}

@INPROCEEDINGS{DDC-2025-ICDE,
  author={Yang, Mingyu and Li, Wentao and Jin, Jiabao and Zhong, Xiaoyao and Wang, Xiangyu and Shen, Zhitao and Jia, Wei and Wang, Wei},
  booktitle={2025 IEEE 41st International Conference on Data Engineering (ICDE)}, 
  title={Effective and General Distance Computation for Approximate Nearest Neighbor Search}, 
  year={2025},
  volume={},
  number={},
  pages={1098-1110},
  doi={10.1109/ICDE65448.2025.00087}}

@article{PipeANN-2026-OSDI,
author = {Guo, Hao and Lu, Youyou},
title = {Achieving Low-Latency Graph-Based Vector Search via Aligning Best-First Search Algorithm with SSD},
year = {2026},
issue_date = {May 2026},
publisher = {Association for Computing Machinery},
address = {New York, NY, USA},
volume = {22},
number = {2},
issn = {1553-3077},
url = {https://doi.org/10.1145/3793926},
doi = {10.1145/3793926},
journal = {ACM Trans. Storage},
month = mar,
articleno = {12},
numpages = {30}
}

@article{DADE-2024-VLDB,
author = {Deng, Liwei and Chen, Penghao and Zeng, Ximu and Wang, Tianfu and Zhao, Yan and Zheng, Kai},
title = {Efficient Data-Aware Distance Comparison Operations for High-Dimensional Approximate Nearest Neighbor Search},
year = {2024},
issue_date = {November 2024},
publisher = {VLDB Endowment},
volume = {18},
number = {3},
issn = {2150-8097},
url = {https://doi.org/10.14778/3712221.3712244},
doi = {10.14778/3712221.3712244},
journal = {Proc. VLDB Endow.},
month = nov,
pages = {812–821},
numpages = {10}
}

@article{Starling-2024-SIGMOD,
author = {Wang, Mengzhao and Xu, Weizhi and Yi, Xiaomeng and Wu, Songlin and Peng, Zhangyang and Ke, Xiangyu and Gao, Yunjun and Xu, Xiaoliang and Guo, Rentong and Xie, Charles},
title = {Starling: An I/O-Efficient Disk Graph Index Framework for High-Dimensional Vector Similarity Search on Data Segment},
year = {2024},
issue_date = {February 2024},
publisher = {Association for Computing Machinery},
address = {New York, NY, USA},
volume = {2},
number = {1},
url = {https://doi.org/10.1145/3639269},
doi = {10.1145/3639269},
journal = {Proc. ACM Manag. Data},
month = mar,
articleno = {14},
numpages = {27}
}

@misc{OptimizationsGraphbasedDiskresident--2026-li,
  title = {I/{{O}} Optimizations for Graph-Based Disk-Resident Approximate Nearest Neighbor Search: A Design Space Exploration},
  shorttitle = {I/{{O}} Optimizations for Graph-Based Disk-Resident Approximate Nearest Neighbor Search},
  author = {Li, Liang and Gong, Shufeng and Yang, Yanan and Wang, Yiduo and Wu, Jie},
  year = 2026,
  month = feb,
  number = {arXiv:2602.21514},
  eprint = {2602.21514},
  primaryclass = {cs},
  publisher = {arXiv},
  doi = {10.48550/arXiv.2602.21514},
  urldate = {2026-02-27},
  archiveprefix = {arXiv},
  langid = {english}
}

@misc{AiSAQ-2025-arxiv,
      title={AiSAQ: All-in-Storage ANNS with Product Quantization for DRAM-free Information Retrieval}, 
      author={Kento Tatsuno and Daisuke Miyashita and Taiga Ikeda and Kiyoshi Ishiyama and Kazunari Sumiyoshi and Jun Deguchi},
      year={2025},
      eprint={2404.06004},
      archivePrefix={arXiv},
      primaryClass={cs.IR},
      url={https://arxiv.org/abs/2404.06004}, 
}

@inproceedings{PEO-2024-ICML,
author = {Lu, Kejing and Xiao, Chuan and Ishikawa, Yoshiharu},
title = {Probabilistic routing for graph-based approximate nearest neighbor search},
year = {2024},
publisher = {JMLR.org},
booktitle = {Proceedings of the 41st International Conference on Machine Learning},
articleno = {1347},
numpages = {19},
location = {Vienna, Austria},
series = {ICML'24}
}

@inproceedings{ADA-NNS-2025-WWW,
author = {Jung, Sungjun and Park, Yongsang and Lee, Haeun and Oh, Young H. and Lee, Jae W.},
title = {Angular Distance-Guided Neighbor Selection for Graph-Based Approximate Nearest Neighbor Search},
year = {2025},
isbn = {9798400712746},
publisher = {Association for Computing Machinery},
address = {New York, NY, USA},
url = {https://doi.org/10.1145/3696410.3714870},
doi = {10.1145/3696410.3714870},
booktitle = {Proceedings of the ACM on Web Conference 2025},
pages = {4014–4023},
numpages = {10},
location = {Sydney NSW, Australia},
series = {WWW '25}
}

@inproceedings{FINGER-2023-WWW,
author = {Chen, Patrick and Chang, Wei-Cheng and Jiang, Jyun-Yu and Yu, Hsiang-Fu and Dhillon, Inderjit and Hsieh, Cho-Jui},
title = {FINGER: Fast Inference for Graph-based Approximate Nearest Neighbor Search},
year = {2023},
isbn = {9781450394161},
publisher = {Association for Computing Machinery},
address = {New York, NY, USA},
url = {https://doi.org/10.1145/3543507.3583318},
doi = {10.1145/3543507.3583318},
booktitle = {Proceedings of the ACM Web Conference 2023},
pages = {3225–3235},
numpages = {11},
location = {Austin, TX, USA},
series = {WWW '23}
}

@inproceedings{
RankRAG-2024-neuips,
title={Rank{RAG}: Unifying Context Ranking with Retrieval-Augmented Generation in {LLM}s},
author={Yue Yu and Wei Ping and Zihan Liu and Boxin Wang and Jiaxuan You and Chao Zhang and Mohammad Shoeybi and Bryan Catanzaro},
booktitle={The Thirty-eighth Annual Conference on Neural Information Processing Systems},
year={2024},
url={https://openreview.net/forum?id=S1fc92uemC}
}

@inproceedings{
RetRobust-2024-ICLR,
title={Making Retrieval-Augmented Language Models Robust to Irrelevant Context},
author={Ori Yoran and Tomer Wolfson and Ori Ram and Jonathan Berant},
booktitle={The Twelfth International Conference on Learning Representations},
year={2024},
url={https://openreview.net/forum?id=ZS4m74kZpH}
}

@inproceedings{
SelfRAG-2024-ICLR,
title={Self-{RAG}: Learning to Retrieve, Generate, and Critique through Self-Reflection},
author={Akari Asai and Zeqiu Wu and Yizhong Wang and Avirup Sil and Hannaneh Hajishirzi},
booktitle={The Twelfth International Conference on Learning Representations},
year={2024},
url={https://openreview.net/forum?id=hSyW5go0v8}
}

@INPROCEEDINGS{SIFT-IVFADC-2011-ICASSP,
  author={Jégou, Hervé and Tavenard, Romain and Douze, Matthijs and Amsaleg, Laurent},
  booktitle={2011 IEEE International Conference on Acoustics, Speech and Signal Processing (ICASSP)}, 
  title={Searching in one billion vectors: Re-rank with source coding}, 
  year={2011},
  volume={},
  number={},
  pages={861-864},
  doi={10.1109/ICASSP.2011.5946540}}

@inproceedings{MSONG-2012-WWW,
author = {McFee, Brian and Bertin-Mahieux, Thierry and Ellis, Daniel P.W. and Lanckriet, Gert R.G.},
title = {The million song dataset challenge},
year = {2012},
isbn = {9781450312301},
publisher = {Association for Computing Machinery},
address = {New York, NY, USA},
url = {https://doi.org/10.1145/2187980.2188222},
doi = {10.1145/2187980.2188222},
booktitle = {Proceedings of the 21st International Conference on World Wide Web},
pages = {909–916},
numpages = {8},
location = {Lyon, France},
series = {WWW '12 Companion}
}

@article{GIST-PQ-2011-TPAMI,
author = {Jegou, Herve and Douze, Matthijs and Schmid, Cordelia},
title = {Product Quantization for Nearest Neighbor Search},
year = {2011},
issue_date = {January 2011},
publisher = {IEEE Computer Society},
address = {USA},
volume = {33},
number = {1},
issn = {0162-8828},
url = {https://doi.org/10.1109/TPAMI.2010.57},
doi = {10.1109/TPAMI.2010.57},
journal = {IEEE Trans. Pattern Anal. Mach. Intell.},
month = jan,
pages = {117–128},
numpages = {12}
}

@inproceedings{MSMARCO-Dataset-2016-CocoNIPS,
  author = {Nguyen, Tri and Rosenberg, Mir and Song, Xia and Gao, Jianfeng and Tiwary, Saurabh and Majumder, Rangan and Deng, Li},
  booktitle = {CoCo@NIPS},
  editor = {Besold, Tarek Richard and Bordes, Antoine and d'Avila Garcez, Artur S. and Wayne, Greg},
  ee = {https://ceur-ws.org/Vol-1773/CoCoNIPS_2016_paper9.pdf},
  publisher = {CEUR-WS.org},
  series = {CEUR Workshop Proceedings},
  title = {MS MARCO: A Human Generated MAchine Reading COmprehension Dataset.},
  url = {http://dblp.uni-trier.de/db/conf/nips/coco2016.html#NguyenRSGTMD16},
  volume = 1773,
  year = 2016
}

@online{technicalreport,
	title = {Technical Report},
	author    = {Wenxuan Xia and Mingyu Yang and Wentao Li and Wei Wang},
	year = {2026},
	url = {https://github.com/gegeji/hrnn/blob/main/technical_report.pdf}
}

@article{VSAG-VLDB-2025-Mingyu,
  title={VSAG: An Optimized Search Framework for Graph-Based Approximate Nearest Neighbor Search},
  author={Zhong, Xiaoyao and Li, Haotian and Jin, Jiabao and Yang, Mingyu and Chu, Deming and Wang, Xiangyu and Shen, Zhitao and Jia, Wei and Gu, George and Xie, Yi and others},
  journal={Proceedings of the VLDB Endowment},
  volume={18},
  number={12},
  pages={5017--5030},
  year={2025},
  publisher={VLDB Endowment}
}

\clearpage

\end{document}